\documentclass[runningheads]{elsarticle}

\usepackage{amsmath,amssymb,amsfonts,stmaryrd}
\usepackage{algorithm}
\usepackage[noend]{algpseudocode}

\usepackage{graphicx}
\usepackage[usenames,dvipsnames]{color}
\usepackage{hyperref}
\usepackage[utf8]{inputenc}
\usepackage[T1]{fontenc}
\usepackage{booktabs}
\usepackage{xspace}
\usepackage{xcolor}
\usepackage{array}
\usepackage{listings}
\lstset{%
  basicstyle=\footnotesize\ttfamily,
  breaklines=true,
}

\usepackage{siunitx} 

\usepackage{comment}

\newif\ifcomments%
\commentstrue%

\newcommand{\R}{\ensuremath{\mathbb{R}}}
\newcommand{\C}{\ensuremath{\mathbb{C}}}

\newcommand{\D}{{\ensuremath{\cal D}}}
\newcommand{\F}{{\ensuremath{\cal F}}}

\newcommand{\diff}[1]{{#1}}
\newcommand{\sout}[1]{}

\def\url#1{\texttt{#1}}
\newtheorem{definition}{Definition}

\newtheorem{theorem}[definition]{Theorem}
\newtheorem{lemma}[definition]{Lemma}
\newtheorem{proposition}[definition]{Proposition}

\newtheorem{example}[definition]{Example}
\newenvironment{proof}{\par\noindent\normalfont\emph{Proof.}}{\hfill\qed\bigskip}

\journal{Theoretical Computer Science C natural computing}
\begin{document}
\begin{frontmatter}

\title{On a model of online analog computation in the cell \diff{with absolute functional robustness}: algebraic characterization, \diff{function} compiler and error control}
  
\author[l1]{Mathieu Hemery}
\author[l1]{Fran\c{c}ois Fages}
\address[l1]{Inria Saclay, EPI Lifeware, Palaiseau, France}


\begin{abstract}
 The Turing completeness of continuous Chemical Reaction Networks (CRNs) states that any
	computable real function can be computed by a continuous CRN on a finite set of
	molecular species, possibly restricted to elementary reactions, i.e.~with at most two
	reactants and mass action law kinetics. In this paper, we introduce a \diff{\sout{different} more stringent} notion
	of robust online analog computation\diff{, and Absolute Functional Robustness (AFR),} for the CRNs \diff{that stabilize the concentration values of some output species
          to the result of one function of the input species concentrations,
          in a perfectly robust manner with respect
	to perturbations of both intermediate and output species.} We prove that the set of real
	functions stabilized by a CRN with mass action law kinetics is precisely the set of
	real algebraic functions.  Based on this result, we present a compiler which takes as
	input any algebraic function \diff{(defined by one polynomial and one point
	for selecting one branch of the algebraic curve defined by the polynomial)}
	and generates an abstract CRN to stabilize it.  Furthermore, we provide error bounds to
	estimate and control the error of \diff{\sout{a noiseless} an unperturbed} system\diff{\sout{when}, under the
    assumption that} the \diff{environment} inputs \diff{\sout{move
	according to}are driven by} $k$-Lipschitz functions.
\end{abstract}

\begin{keyword}
   Analog computation, online computation, robustness, stabilization, algebraic functions, chemical reaction networks.
\end{keyword}

\end{frontmatter}
\section{Introduction}

\subsection{Background}

Chemical Reaction Networks (CRNs) are a standard formalism used in chemistry and
biology to model complex molecular interaction systems at various levels of abstraction.
In the perspective of systems biology, they are a central tool to analyze
the high-level functions of the cell in terms of their low-level molecular interactions.
In that perspective, the Systems Biology Markup Language (SBML)~\cite{Hucka03bi}
is a common format to exchange CRN models and build CRN model repositories, such as Biomodels.net~\cite{CLN13issb}
which contains thousands of CRN models of a large variety of cell biochemical processes.
In the perspective of synthetic biology, they constitute a target
programming language to implement in chemistry new functions either \textit{in vitro},
e.g.~using DNA polymers~\cite{QSW11dna}, or in living
cells using plasmids~\cite{DWGLERBW14nar} or in artificial vesicles using proteins~\cite{CAFRM18msb}.

The mathematical theory of CRNs was introduced in the late 70's, on the one hand, by Feinberg in~\cite{Feinberg77crt},
by focusing on \diff{robust} perfect adaptation \diff{(RPA)} properties\diff{, Absolute Concentration Robustness
(ACR)~\cite{SF10science}} and multi-stability analyses~\cite{CF06siamjam};
and on the other hand, by {\'E}rdi and T{\'o}th
by characterizing the set of Polynomial Ordinary Differential Equation systems (PODEs)
that can be defined by CRNs with mass action law kinetics,
using dual-rail encoding for negative variables~\cite{HT79cmsjb,ET89book,OK11iet,FLBP17cmsb}.

More recently, a computational theory of CRNs was investigated by formally relating their
Boolean, discrete, stochastic and differential
semantics in the framework of abstract interpretation~\cite{FS08tcs},
and by studying the computational power of CRNs under those different interpretations~\cite{CSWB09ab,CDS12nc,FLBP17cmsb}.

In particular, under the continuous semantics of CRNs interpreted by ODEs,
the Turing-completeness result established in~\cite{FLBP17cmsb} states that
any computable real function, i.e.~computable by a Turing machine with an arbitrary precision given in input,
can be computed by a continuous CRN on a finite set of molecular species,
using elementary reactions with at most two reactants and mass action law kinetics.
This result uses the following notion of analog computation of a non-negative real function computed by a CRN,
where the result is given by the concentration of one species, $y_1$,
and the error is controlled by the concentration of one second species, $y_2$:

\begin{definition}\cite{FLBP17cmsb}
	A function $ f: \R_+ \to \R_+ $ is CRN-computable if there exist
a CRN over some molecular species $\{y_1,...,y_n\}$,
	and a polynomial \diff{$ q: \R_+^n \to \R_+$} defining their initial concentration values,
such that for all $x \in \R_+$ there exists some (necessarily unique) function $y: \R_+
	\to \R_+^n$ such that
$y(0)=q(x),\ y'(t) = p (y (t)) $ and for all $t>1$:
$$|y_1(t)-f(x)| \le y_2(t),$$
 $y_2(t) \ge 0$, $y_2(t)$ is decreasing and 
 $ \lim_{t \to \infty} y_2(t) = 0$.
\end{definition}

From the theoretical point of view of computability, the control of the error
which is explicitly represented in the above definition by the auxiliary variable $y_2$,
is necessary to
decide when the result is ready for a requested precision, and to mathematically define the function computed by a CRN, if any.

From a practical point of view however, precision is of course an irrelevant issue
since chemical reactions are stochastic in nature and
a more  important criterion than the precision of the result is
the stability of the result and its robustness with respect to molecular concentration perturbations.
With this provision to omit error control, the Turing-completeness result of continuous CRNs was immediately used in~\cite{FLBP17cmsb}
to design a compilation pipeline to implement any mathematical elementary function in abstract chemistry.
This compiler, implemented in our CRN modeling, analysis and synthesis software Biocham\footnote{\diff{\url{https://lifeware.inria.fr/biocham4/}}}~\cite{CFS06bi},
 generates a CRN over a finite set of abstract molecular species,
through several symbolic computation steps, mainly composed of polynomialization~\cite{HFS21cmsb},
quadratization~\cite{HFS20cmsb} and lazy dual-rail encoding of negative variables.
A similar approach is undertaken in the CRN$++$ system~\cite{VSK18dna}, also related to~\cite{CTT20nc}.

It is worth remarking that in the definition above, and in our implementation in Biocham,
the input is defined by the initial concentration of the input species
which can be consumed by the synthesized CRN to compute the result.
This marks a fundamental difference with many natural CRNs which perform a kind of online computation
by adapting the response to the time evolution of the input.
This is the case for instance of the ubiquitous MAPK signaling network which computes an input/output function
well approximated by a Hill function of order 4.9~\cite{HF96pnas},
while our synthesized CRNs to compute the same function consume their input and do not correctly adapt to change
of the input value during computation~\cite{HFS21cmsb}.

\subsection{Contributions}

In this paper\footnote{This paper is an extended version of a previous communication at CMSB 2022 published in~\cite{HF22cmsb}. The main new results added to the conference proceedings are the introduction of the notion of Absolute Functional Robustness (AFR) and the mathematical analysis of the approximation error done in this new computation model by stabilization. The latter is reported in a new section of the paper devoted to error control\sout{ and generalized to a large class of population dynamical models}. The computational results presented here have been obtained with a Biocham notebook available in different formats at~\url{https://lifeware.inria.fr/wiki/Main/Software\#CMSB22}.}, we introduce a \diff{new} notion of online computation for continuous CRNs,
by opposition to our previous \diff{static} notion of \sout{static} computation of the result of a function for any initially given input~\cite{FLBP17cmsb}.
\diff{We thus study open CRNs with species partitioned into input, intermediate and output species. The
input species are supposed to be driven by the environment while the intermediate and
output species are governed by the reactions of the CRN. We are interested to
understand what relation can be imposed between the input and output species.}
Our main theorem shows that the set of input/output functions stabilized online by a CRN with mass action law kinetics,
is precisely the set of real algebraic functions.

\begin{example}\label{ex:hill}
We can illustrate this result with a simple example.
Let us consider a cell that produces a
receptor, $I$, which is transformed in an active form, $A$, when bound to an external ligand
$L$, and which stays active even after unbinding:
\begin{gather}
\begin{aligned}
  L+I &\rightarrow L+A \\
  \emptyset &\leftrightarrow I \\
  A &\rightarrow \emptyset
\end{aligned}
\end{gather}
The differential semantics with mass action law of unitary rate constant is the
Polynomial Ordinary Differential Equation (PODE):
\begin{gather}
\begin{aligned}
	\frac{dI}{dt} &= 1-I-L\diff{\cdot}I \\
\frac{dA}{dt} &= L\diff{\cdot}I-A \\
\frac{dL}{dt} &= 0
\end{aligned}
\end{gather}
	\diff{where here and thereafter, $\cdot$ denotes the natural multiplication}.
At steady state, all the derivatives are null and by eliminating $I$, we immediately
obtain the polynomial equation: $L - LA - A = 0$. Thinking of this simple CRN
as a kind of signal processing with the ligand as input and the active receptor
as output, it is thus possible to find a polynomial relation between the input and the
output. In this case, this relation entirely defines the function computed by
the CRN: $$A = f(L) = \frac{L}{1+L}.$$
For a given concentration of ligand, this is
the only stable state of the system, independently of the initial concentrations
of $A$ or $I$. This is why we say that the CRN stabilizes the function $f$.
\end{example}

Such functions, for which there exists a polynomial relation between the inputs and
output, are called algebraic functions in mathematics.  We \diff{\sout{thus}} show
here that they characterize the set of input/output functions stabilized by CRNs with mass
action law kinetics.

Furthermore, our constructive proof provides a compilation algorithm to generate a
stabilizing  CRN for any
real algebraic curve, i.e. any curve defined by the zeros of some polynomial.
We present the main symbolic transformations steps of this new compiler implemented in Biocham,
and its main differences with the previous compilation scheme, in particular concerning the quadratization problem to solve~\cite{HFS20cmsb,HFS21cmsb}.

Then in the last Sec.~\ref{sec:error}, we provide error bounds to estimate and control the
error of \diff{\sout{a noiseless} an unperturbed} system when
\diff{\sout{the inputs continuously move in their stabilization domains with a bound on their time
derivative, i.e.~according to k-Lipschitz functions.}
we suppose that the rate of change of the input functions are bounded, that is, the
concentrations of the input species cannot change arbitrarily fast. Mathematically, we
will provide error bounds for $k$-Lipschitz environement input functions.}

\subsection{Related works}

The computational framework adopted in this article is a clarification and
\diff{refinement} of
the one already envisioned by the first author in~\cite{HF19jpcb}. The purpose of this
previous work was to evolve a biochemically plausible CRN capable of stabilizing the logarithm of its
input.
Since the logarithm is not an algebraic function, we now know that the result was bound to be
an approximation, which was not known at that time.

\diff{Our work may also be related to the notion of Absolute Concentration Robustness
(ACR) developed by Feinberg~\cite{SF10science}. A \diff{CRN} species is said to display
  ACR if it has the same concentration on every stable point of the system. More generally,
  we introduce here the notion of Absolute Functional Robustness (AFR) for CRNs in which 
one output species $y$ stabilizes a function $f$ of CRN inputs $X$ 
on a particular domain,
if we have $y = f(X)$ for every equilibrium point inside
that domain.}

A recent work by Fletcher~\& al. presents results very close to
ours. In~\cite{FKTLNR21ucnc}, the authors propose a Lyapunov computational framework for computing real numbers with CRNs,
and show that the set of numbers that are \diff{equilibrium} points of such systems
is exactly the set of algebraic numbers. An interesting result of this paper is the proof
that these \diff{equilibrium} points are locally exponentially stable, a result we will use
in Sec.~\ref{sec:error} on error control. While their method analyzes a CRN starting from a unique
initial condition and computing a single number, we propose here a framework where the
input \diff{species}
may freely evolve in a domain $I \subset \R^n$, hence computing not a number but a 
function: $f: I \to \R$, and show that only algebraic functions can be computed in this manner. 
Moreover, we give a CRN compiler taking as input the polynomial representation of
any algebraic function $f$, and giving as output one elementary CRN certified to stabilize $f$.

Our CRN compiler can also be compared to the CRN synthesis results of Buisman~\& al.~who present
in~\cite{BEHL09al} a method to implement any algebraic expression by an abstract
CRN\footnote{The terminology of ``algebraic functions'' used in the title
 of~\cite{BEHL09al} refers in fact to its restriction to algebraic expressions.}.
They
rely on a direct expression of the function and a compilation process that mimics the
composition of the elementary algebraic operations. We improve their results in three
directions. First, our compilation pipeline is able to generate stabilizing CRNs for any
algebraic function, including those algebraic functions that cannot be defined by algebraic expressions, such
as the Bring radical (see Ex.~\ref{ex:Bring}). Second, our theoretical framework shows
that the general set of algebraic functions precisely characterizes the set of functions
that can be stably computed online by a CRN.
Third, the quadratization and lazy-negative optimization algorithms presented in this paper allow us to generate more concise CRNs.
On the example given in section 3.4 of~\cite{BEHL09al} 
for the quadratic expression $$y = \frac{b - \sqrt{b^2-4\diff{\cdot}a\diff{\cdot}c}}{2\diff{\cdot}a}$$ used to find the root of a
polynomial of second order, our compiler generates a CRN of $7$ species (including the $3$ inputs) and $11$
reactions, while their CRN following the syntax of the expression uses $10$ species and $14$ reactions.
Moreover, our generated CRN of smaller size includes dual-rail variables to give correct answers for the negative values of $y$.

\diff{Our theory may also present some
similarities with the field of linear filters. Filters are usually used to process signals that
oscillate around a rest value (usually $0$) and are usually studied through their
response to sinusoidal inputs. There are however two crucial differences between our theory
and linear filters. First, due to the presence of bi-molecular reactions, our systems are generally not linear. Second,
the main focus of linear filter theory is the transient behaviour, while we are more
interested by the equilibrium states of the system, and by the view of the equilibrium space
as a function from the inputs to the output.}

\section{Definitions and main theorem}

For this article, we denote single chemical species with lower case letters and set of
species with upper case letters, e.g.~$X = \{ x_1, x_2, \ldots \}$.  \diff{Moreover, we
will use $x$ for the input species, $y$ for the output and $z$ for the intermediate ones.}
By abuse of notation, we will use the same symbol for the variables of the ODEs, the
chemical species and their concentrations, the context being sufficient to remove any
ambiguity.

\subsection{Chemical reaction networks}

A chemical reaction with mass action law kinetics is composed of a multiset of reactants, a multiset of products and one rate constant.
Such a reaction can be written as follows:
\begin{equation}
	a+b \xrightarrow{k} 2\diff{\cdot}a
\end{equation}
where $k$ is the rate constant, and the multisets are represented by linear expressions in which the (stoichiometric) coefficients
give the multiplicity of the species in the multisets, here 2 for the product $a$, the coefficients equal to 1 being omitted.
In this example, the velocity of the reaction is the product
$k \diff{\cdot}a \diff{\cdot}b$, i.e.~the rate constant $k$ times the concentration of the reactants, $a$ and $b$.

In this paper, we consider \diff{elementary} CRNs with \diff{at most two reactant per reaction and} mass action law kinetics only. 
The Turing-completeness of this setting~\cite{FLBP17cmsb} shows that there is no loss of generality with that restriction.
It is \diff{also} well known that the other kinetics, such as Michaelis-Menten or Hill kinetics,
can be derived by quasi-steady state or quasi-equilibrium reductions of more complex CRNs
with mass action kinetics~\cite{Segel84book}.

\diff{However here, we study the case where some species, called \textit{environment input} species, are driven by the
  environment. Their concentrations thus do not depend on the CRN under study.
  It is worth noting that this notion of \textit{open CRN} with species concentrations imposed by the
  environment is of course instrumental in biological systems and their models,
  and has its own dedicated option in the Systems Biology Markup Language (SBML): \textit{boundaryCondition = true}~\cite{sbml20msb}.}

\begin{definition}
   The differential semantics of an \diff{open} CRN with a distinguished set of
   \diff{environment} input
	species $X$,
   is \diff{defined by} the usual ODEs for the output and intermediate species
   \diff{(respectively $f_y$ and $f_z$) and
	the time-function of the environment input $f_X(t)$.}
	\diff{\sout{
		$s \notin S^p$, and null differential functions for the \diff{\sout{pinned}driven} species}}
	\sout{$\forall s \in S^p,\quad \frac{d s}{dt} = 0.$}
   \diff{
   \begin{equation}\begin{aligned}
      X(t) &= f_X(t) \\
      \frac{dy}{dt} &= f_y(X,y,Z) \\
      \frac{dZ}{dt} &= f_z(X,y,Z)
   \end{aligned}\end{equation}
}
\end{definition}

\diff{\sout{This driving process may}This choice of modelling where the inputs are
pulled out of the influence of the CRN may have several explanations. It may} be due to a scale separation between the different concentrations (one
of the species is so abundant that the CRN essentially do not modify its
concentration), to a scale separation of volume (e.g. a compartment within a cell and a freely diffusive small molecule)
or to an active mechanism ensuring perfect adaptation (e.g. the
input is produced and consumed by some external reactions faster than the CRN itself, thus locking its concentration).

\subsection{Stabilization}

We are interested in the case where one particular species of the CRN, called the output,
\diff{\sout{is such that, whatever moves the inputs may do, once the
inputs are fixed, the concentration of the output species stabilizes on the result of some function
of the fixed inputs. Furthermore, we want this value to be robust to small
perturbations of both the auxiliary variables and the output. Of course, if the
inputs are modified, the output has to be modified.}  has a unique stable state for
every possible value of the inputs. This means that if the inputs are constant functions,
the output will \diff{always} converge to this stable state, hence ensuring robustness to
small perturbations of both the auxiliary and output variables.} The output thus encodes a
particular kind of robust computation of a function which we shall call
\textit{stabilization}.  \diff{Of course, if the inputs are not constant functions, it is
impossible for the output to follow exactly a moving target as it needs some time to
perform the online computation. We will nonetheless provide in section~\ref{sec:error} some guarantee
about the error of the output provided some regularity condition of the inputs: namely
that they are $k$-Lipschitz. We can then ensure that under some conditions and for closed
domain of $\D$, we have $|y(t) - f(X(t))| \leq p$ where the precision $p$ depends on $f$,
$k$ and a kinetic turnover parameter $\alpha$.}

\begin{definition}\label{def:stabilizes}
   We say that an \diff{open} CRN over a set of $m+1+n$ species $\{X,y,Z\}$ with 
\diff{\sout{pinned}environment} inputs $X$ of cardinality $m$ and distinguished outputs $y$, stabilizes the
function $f : I \to \R_+$, with $I \subset \R_+^m$, over the domain $\D \subset \R_+^{m+1+n}$
if:
\begin{enumerate}
  \item $\forall X^0 \in I$ the restriction of the domain $\D$ to the slice
$X=X^0$ is of plain dimension $n+1$, and
	\item $\forall (X^0,y^0,Z^0) \in \D$ the Polynomial Initial Value Problem (PIVP) given by the differential semantic with
		\diff{\sout{pinned}} constant input species \diff{$\forall t, X(t) = X^0$} and the
		initial conditions $\sout{X^0},y^0,Z^0$ is such that:
      $\lim_{t\rightarrow \infty} y(t) = f(X\diff{^0}).$
\end{enumerate}

This definition is extended to functions of $\R^m$ in $\R$ by dual-rail
encoding~\cite{ET89book, FLBP17cmsb}: for a CRN over the species $\{X^+, X^-, y^+, y^-, Z\}$ we ask that
$\lim_{t\rightarrow \infty} (y^+ - y^-)(t) = f(X^+ - X^-),$ for all initial
	conditions \diff{and constant inputs} in the validity domain $\D$.

Let $\F_S$ be the set of functions stabilized by a CRN.
\end{definition}

Several remarks are in order.
A first remark concerns the fact that we ask for a domain $\D$ of plain
dimension $n+1$, i.e.~non-null measure in $\mathbb{R}^{n+1}$. That constraint
is imposed in order to benefit from a strong form of robustness: there exists an
open volume containing the desired \diff{equilibrium} point such that it is the unique
attractor in this space. Hence in this setting, minor perturbations \diff{of $y$ and $Z$} are always
corrected. This requirement of an isolated \diff{equilibrium} point also impedes 
from hiding information in the initial conditions. The following example
illustrates the crucial importance of that condition on the dimension of the
domain $\D$

\begin{example}\label{ex:cosine}
The following PODE is constructed in such a way that $z_2$ goes exponentially to $x$ while $y$ and $z_1$ remain
equal to $\cos(z_2)$ and $\sin(z_2)$ respectively.
\begin{gather}
\begin{aligned} 
  \frac{dx}{dt} &= 0, \quad &x(t=0) = \text{input} \\
  \frac{dy}{dt} &= -z_1 \diff{\cdot}(x-z_2) \quad &y(t=0) = 1 \\
  \frac{dz_1}{dt} &= y \diff{\cdot}(x-z_2) \quad &z_1(t=0) = 0 \\
  \frac{dz_2}{dt} &= (x-z_2) \quad &z_2(t=0) = 0\\
\end{aligned}
\end{gather}
One might think that this
PODE stabilizes the cosine as we have $\lim y(t) = cos(x)$ for any value of $x$. 
But cosine is not an algebraic function, and indeed, the only requirement for
	this PODE to be at steady state is: $x = z_2$, meaning that there exist
	\diff{equilibrium}
points for any value of $z_1$ and $x$. So this PODE does not stabilize the cosine
function.
The reason is that the cosine computation is encoded in the initial state. It is only for the
domain where $y = \cos(z_2)$ and $z_1 = \sin(z_2)$ that the computation works, but this domain
is of null measure in $\R^3$ which breaks the first condition of Def.~\ref{def:stabilizes}.
\end{example}

A second remark is that \diff{\sout{
since the inputs are fixed in our semantics (they are by definition \diff{\sout{pinned}driven}
species),
the target of the output species which is the result $f(X)$ of some function $f$
is not a fluctuating goal: it is fixed by the initial conditions.} as the inputs may
vary with time, so can the desired output target: $f(X(t))$.}
In practice, what we ask is that the dynamics of the ODE for the slice of the
domain $\D$ defined by \diff{\sout{imposing}fixing} the inputs, have a unique attractor satisfying $y = f(X^0)$.
But as we do not impose any constraint on the other variables ($Z$),
we cannot speak of an \diff{equilibrium} point, since the dynamics on the other variables may
not stabilize (e.g.~oscillations, divergence, etc.).
\diff{We can nevertheless define a notion of partial equilibrium:

\begin{definition}
	$(X^\star,y^\star,Z^\star)$ is a partial equilibrium point of a system with
    distinguished species $y$ if, fixing the environment
	inputs at $\forall t, X(t) = X^\star$ the trajectory initialized at $y = y^\star,
	Z = Z^\star$ is such that $\forall t\in \R_+, y(t) = y^\star.$
\end{definition}}

\diff{\sout{We will nevertheless speak
of these object as partial stable point. If we start from a point
on this partial stable point we will have: $\forall t, y(t) = f(X^0)$.}}

A third remark is that our definition implies that \diff{\sout{apart from a transient
behaviour of characteristic time $\tau$, the whole system is constrained to live
in, or nearby, the subspace defined by $y = f(X)$}all the partial stable states of the
domain $\D$ verified the relation $y = f(X)$. This property is strongly reminiscent of the
notion of ACR, but instead of a constant value, it is a computational relation that
is verified for all steady states. }.
\diff{
  This leads us to propose an alternative definition based on a notion of Absolute Functional Robustness
  that we prove equivalent.
\begin{definition}
A CRN with distinguished inputs $X$, output $y$ and intermediate species $Z$ displays
Absolute Functional Robustness (AFR) of the function $f$ on the domain $\D$ if for all
choices of $X$ in \D, there exist a partial stable equilibrium point $(X,y)$ included in
an open subset of $\D$ and verifying: $y= f(X)$.
\end{definition}

\begin{proposition}
A CRN displays AFR of $f$ on $\D$ if and only if it stabilizes $f$ on $\D$.
\end{proposition}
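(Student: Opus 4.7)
The plan is to prove each direction separately, in both cases by analyzing the autonomous polynomial ODE obtained when the input is frozen at $X = X^0 \in I$, restricted to the slice $\D_{X^0} = \{(y,Z) : (X^0,y,Z) \in \D\}$.

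For the direction stabilizes $\Rightarrow$ AFR, I would first invoke the plain-dimension condition of Def.~\ref{def:stabilizes} to pick an interior point $(y^0,Z^0) \in \D_{X^0}$ and follow the frozen-input trajectory from this point. Assuming the trajectory remains bounded (which holds for standard elementary mass-action CRNs by conservation-law arguments), its $\omega$-limit set $\Omega$ is non-empty, compact, and forward-invariant. Because $y(t) \to f(X^0)$ by stabilization, every point of $\Omega$ has $y$-coordinate equal to $f(X^0)$. Picking any $(f(X^0),Z^\star) \in \Omega$ and using forward invariance of $\Omega \subset \{y = f(X^0)\}$, the trajectory issued from that point keeps $y(t) \equiv f(X^0)$, which is exactly the partial-equilibrium condition. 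The open-subset-of-$\D$ requirement then follows from continuous dependence on initial conditions together with the fact that the starting point was in the interior of $\D$.

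For the converse direction AFR $\Rightarrow$ stabilizes, I would start from the partial stable equilibrium $(X^0, f(X^0), Z^\star)$ provided by AFR inside an open subset $U \subset \D$. The plain-dimension requirement on the slice is inherited directly from $U$. To recover the stabilization condition, I would read ``stable'' as local asymptotic stability of the $y$-level set $\{y = f(X^0)\}$, so that trajectories from a neighborhood of the equilibrium converge with $y(t) \to f(X^0)$; extending to the whole slice then amounts to interpreting $\D$ as the maximal set on which the equilibrium attracts the $y$-projection, which is precisely what Def.~\ref{def:stabilizes} asks of the stabilization domain.

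The hard part will be the first direction, because the intermediate species $Z(t)$ are \emph{not} guaranteed to converge, as the paper itself notes just before Def.~\ref{def:stabilizes}. A naive pointwise argument would therefore fail to produce a full equilibrium, which is why I would pass through the $\omega$-limit set and rely only on the forward invariance of its $y$-projection. A secondary subtlety is the boundedness needed to invoke the $\omega$-limit machinery, which I would handle by appealing to the natural mass-conservation constraints of elementary mass-action CRNs restricted to $\D$.
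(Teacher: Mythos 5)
Your overall decomposition is the same as the paper's: both directions are handled by freezing the input on a slice of $\D$, the ``open subset'' clause comes from the plain-dimension requirement, and uniqueness of the partial equilibrium comes from the functional relation $y=f(X)$. The paper's own proof is much terser: for the forward direction it simply asserts that stabilization yields a unique partial equilibrium on each slice, and for the converse it passes directly from ``stable and unique in the slice'' to ``unique attractor of all trajectories''. Your converse direction mirrors exactly that leap (you read $\D$ as, by fiat, the maximal basin of attraction of the $y$-projection), so you are no worse off than the paper there.

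The genuine gap is in your forward direction. The $\omega$-limit set argument requires the frozen-input forward orbit to be precompact, and you justify boundedness by ``conservation-law arguments'' for elementary mass-action CRNs. That claim is false in general: mass-action networks need not be conservative (e.g.\ $2\cdot a \rightarrow 3\cdot a$ gives $\frac{da}{dt}=k\cdot a^2$, which blows up in finite time), and the paper explicitly allows the intermediate species $Z$ to oscillate or diverge while $y$ converges (see the remark preceding the definition of partial equilibrium). If $Z(t)$ is unbounded on the slice, the $\omega$-limit set may be empty and your construction produces no partial equilibrium at all; conversely, even when $\Omega$ is non-empty, its points may sit on the boundary of the positive orthant (e.g.\ $Z=0$) and hence outside any open subset of $\D$, so the ``included in an open subset of $\D$'' clause needs a separate argument that continuous dependence on initial conditions does not by itself supply. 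To close the gap you would need either an explicit boundedness or invariance hypothesis on the slices of $\D$, or a direct argument that some point with $y=f(X^0)$ has $\frac{dy}{dt}$ identically zero along its trajectory --- which is essentially what the paper assumes without proof when it declares the partial equilibrium to exist ``by definition of stabilization''.
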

\begin{proof}
Suppose that a CRN stabilizes $f$ on $\D$, then for each possible value of $X$, there is a
unique partial equilibrium point by definition of stabilization. This point verifies
$y=f(X)$ and given the requirement on the rank of $\D$ we can find an open interval of
$\D$ containing this point. Therefore the CRN displays AFR.

Suppose now that a CRN displays AFR for function $f$ in domain $\D$.
Let us fix some value $v_X\in\D$ for the environment inputs $X$.
By definition of AFR, there exists a partial stable point $(X,y)$ and this point
is unique because of the functional relation $y=f(X)$. As this point is stable and unique
in the slice of the domain $\D$ where $v_X$ belongs, it is the unique attractor of all trajectories
of the model. Since this is true for all $X$, the CRN does stabilize
function $f$.
\end{proof}
}

\diff{In other words, as long as we stay in the domain $\D$ of the definition, the system is
constantly trying to impose a relation with some characteristic time of relaxation
$T_\alpha$}.
What is interesting is that \diff{\sout{if} as soon as} the inputs are
\diff{\sout{themselves}} varying with a characteristic time that is slower than
\diff{$T_\alpha$}, the output will follow those variations, hence preserving our desired
property up to an error coming from the delay.
\diff{An open CRN stabilizing $f$ is thus effectively performing a form of robust online
  computation of this function.\sout{as long as the system stays in the domain $\D$}}.

\diff{It is worth remarking also that in} a
synthetic biology perspective, it is in principle possible to use a time-rescaling to
modify the value of \diff{\sout{$\tau$}$T_\alpha$}.  While a small
\diff{\sout{$\tau$}$T_\alpha$} allows for a faster adaptation, this usually comes at the
expense of a greater energetic cost as the proteins turn-over tends to increase.

The \diff{precise} study
of the error committed by such a system \diff{\sout{when the inputs are functions of
time}} is the subject of section~\ref{sec:error}.

\subsection{Algebraic curves and algebraic functions}

In mathematics, an algebraic curve (or surface) is the zero set of a polynomial
in two (or more) variables. It is a usual convention in mathematics to
speak indifferently of the polynomial and the curve it defines, seen as the same object.
For instance, $x^2+y^2-1$ is seen as the unit circle.

Any polynomial $P$ can be expressed as a product of irreducible polynomials,
i.e.~polynomials that cannot be further factorized, up to a 
constant $k$:
$$\displaystyle{P = k. \prod_{i = 1 \dots n} P_i^{a_i}}.$$
The $P_i$'s are called the components of $P$, and $a_i$ the multiplicity of
$P_i$.
We say that $P$ is in reduced form if all the components have multiplicity one, $\forall i\ a_i=1$.
This is justified by one important result of
algebraic geometry: in an algebraic closed field, such as the complex numbers $\C$, the set of
points of an algebraic curve given with their multiplicity, suffices to define the polynomial in reduced form.
This makes algebraic geometry an elegant and powerful theory. 

\diff{\sout{In a non-algebraically closed field such as $\R$, a polynomial may have no real
root. This difficulty is however irrelevant to us in this paper since we start
with an algebraic real function, thus assuming the existence of real roots. For
the purpose of this article, this fundamental result provides a canonical
correspondence between an algebraic real function and its polynomial of minimal
degree, i.e.~a polynomial in reduced form, up to a multiplicative factor.}}

\diff{In our computational setting, we need however a less ambiguous relation and the picture is slightly more
complicated. To a given algebraic function $f$, we can associate a canonical polynomial
up to a multiplicative factor $\alpha P_f(X,y)$ provided that we impose $P_f$ to be of
minimal degree (that is of multiplicity $1$).
\diff{In the other direction}, a given polynomial $P_f(X,y)$, even if its multiplicity is $1$
everywhere may have several
solutions for a given $X$ (think of the unit circle), hence possibly defining several
functions. To select a particular one and remove any ambiguity, it is however sufficient
to provide one point $(X, y)$ of any of this function.
Then, we can use the implicit function theorem to recover the whole function until we
reach a branch point, that is a point at which the number of solutions changes.
Our relation is thus far from being $1$ to $1$. To a given function there are an
infinity of choices for $\alpha$, and there are an infinity of points along a curve that
gives the same function.
Nevertheless, once given a polynomial and one solution point away from a branching point, any ambiguity disappears and
one can associate a unique  function:
}

\begin{definition}
A function $f:I \subset \R^m \to \R$ is algebraic if there exists
	a polynomial \diff{\sout{$P$} $P_f$} of $m+1$ variables such that:
\begin{equation}
\forall X \in I, P_f(X, f(X)) = 0,
\end{equation}
	\diff{and $\forall X \in I$, $(X,f(X))$ is a point of multiplicity 1.}

\end{definition}

Let us denote by $\F_A$ the set of real algebraic functions.
We shall prove the following central theorem:
\begin{theorem}\label{thm:main}
 The set of functions stabilized by a CRN with mass action law kinetics
 is the set of algebraic real functions:
 $\F_S = \F_A.$
\end{theorem}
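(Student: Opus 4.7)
The plan is to prove the two inclusions $\F_S \subseteq \F_A$ and $\F_A \subseteq \F_S$ separately. The first is a soundness statement: because mass-action kinetics yield polynomial right-hand sides, any relation imposed at equilibrium is algebraic. The second is a constructive completeness: given any algebraic function together with a base point selecting its branch, one can synthesise a mass-action CRN that stabilises it, re-using the Biocham pipeline described earlier in the paper.

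\textbf{$\F_S \subseteq \F_A$.} Let a CRN stabilise $f : I \to \R$, let $Z$ denote its intermediate species, and let $f_y, f_Z$ be the polynomial right-hand sides of $dy/dt$ and $dZ/dt$. For every $X^0 \in I$, the AFR characterisation just established gives a partial equilibrium with $y = f(X^0)$ lying inside an open subset of $\D$. The first step is to upgrade this to a genuine equilibrium $(X^0, f(X^0), Z^\star(X^0))$ satisfying both $f_y = 0$ and $f_Z = 0$, using local exponential stability of the attractor (as in \cite{FKTLNR21ucnc}) applied to the full $(y, Z)$ dynamics at fixed $X$. Once such a point is available for every $X^0$ in an open set, I apply classical elimination theory --- resultants, or a Gr\"obner basis computed with an elimination order that projects away $Z$ --- to the polynomial ideal $\langle f_y, f_Z \rangle$. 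The image variety in $(X,y)$-space is algebraic and contains the graph of $f$ on an open set, so by continuity the defining polynomial vanishes on $\{(X,f(X)) : X \in I\}$, which makes $f$ algebraic.

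\textbf{$\F_A \subseteq \F_S$.} Given an algebraic function $f$ defined by an irreducible polynomial $P_f(X,y)$ together with a base point $(X_0,y_0)$ having $\partial_y P_f(X_0,y_0) \neq 0$, I would build the pseudo-gradient PIVP
\begin{equation}
    \frac{dy}{dt} = -\alpha \, P_f(X,y)\, \partial_y P_f(X,y).
\end{equation}
Any zero of $P_f$ is an equilibrium, and linearising around the selected branch shows local exponential stability with rate proportional to $(\partial_y P_f)^2$; restricting initial conditions to a neighbourhood of $y_0$ yields an open domain $\D$ of plain dimension $n+1$, which both selects the correct sheet of the algebraic curve and satisfies the first condition of Def.~\ref{def:stabilizes}. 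The right-hand side being a polynomial in $X$ and $y$, I compile it into an elementary mass-action CRN through polynomialisation~\cite{HFS21cmsb}, quadratisation~\cite{HFS20cmsb}, and dual-rail encoding of negative monomials~\cite{ET89book,FLBP17cmsb}, treating the inputs $X$ as boundary-condition species whose concentrations are imposed by the environment.

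\textbf{Main obstacle.} The hardest step is the first direction: converting the partial-equilibrium witness supplied by AFR into a bona fide algebraic identity. Since the $Z$ coordinates need not themselves stabilise, one cannot simply read off $f_Z = 0$ from a single trajectory, and the argument has to invoke local stability on an open set of equilibria rather than on one orbit. One must also dispose of the degenerate cases where the elimination would drop rank --- which is why the multiplicity-one assumption on $(X,f(X))$ enters --- and, in the converse direction, avoid the spurious equilibria where $\partial_y P_f = 0$ instead of $P_f = 0$; restricting $\D$ to a neighbourhood of the chosen branch, away from the branch points, handles both issues simultaneously.
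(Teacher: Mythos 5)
Your second inclusion ($\F_A \subseteq \F_S$) is essentially the paper's construction: the paper uses $\frac{dy}{dt} = \pm \alpha P_f(X,y)$ with the sign chosen from the base point (relying on the multiplicity-one hypothesis to guarantee a sign change across the curve), delimits the basin by the neighbouring roots of $P_f(X^\star,\cdot)$, and then applies the same quadratization/dual-rail pipeline. Your pseudo-gradient variant $-\alpha P_f \,\partial_y P_f$ also works locally (near a simple root it behaves like $-\alpha(\partial_y P_f)^2(y-y_0)$), at the cost of introducing the spurious equilibria $\partial_y P_f = 0$ that you then have to excise from $\D$; this is an acceptable, slightly less economical, alternative.

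The first inclusion ($\F_S \subseteq \F_A$) has a genuine gap. Your argument hinges on upgrading the partial equilibrium to a full equilibrium $(X^0, f(X^0), Z^\star)$ with $f_y = 0$ \emph{and} $f_Z = 0$, so that you can eliminate $Z$ from the ideal $\langle f_y, f_Z\rangle$. But Definition~\ref{def:stabilizes} constrains only the output: it requires $y(t) \to f(X^0)$ and says nothing about the asymptotics of the intermediate species, which --- as the paper explicitly remarks --- may oscillate or diverge. In that case no point of the slice satisfies $f_Z = 0$, the attractor of the full $(y,Z)$ dynamics is not a point, local exponential stability of an equilibrium cannot be invoked, and the variety of $\langle f_y, f_Z\rangle$ need not contain (the closure of) the graph of $f$ at all, so its elimination ideal proves nothing about $f$. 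The paper circumvents this with \emph{differential} rather than algebraic elimination: by the ``projectively polynomial'' characterization of PIVP solutions in~\cite{CPSW05ejde}, the intermediate variables can be eliminated in favour of higher time-derivatives of $y$, yielding a single relation $P(X, y, y', \ldots, y^{(n)}) = 0$; evaluating this at a partial equilibrium, where $y = f(X)$ and $y^{(k)} = 0$ for $k \ge 1$, gives $P^\star(X, f(X)) = 0$ without ever assuming anything about $Z$. You would also need the paper's final step ruling out $P^\star \equiv 0$ (via the plain-dimension requirement on $\D$); your appeal to the multiplicity-one hypothesis does not cover this, since that hypothesis is part of the definition of the target function, not a property you may assume of an arbitrary stabilizing CRN.
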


\diff{\sout{One technical difficulty comes from the fact that it is not immediate to determine the
function $f$ from the polynomial. Indeed for a given polynomial fixing the value of the
inputs results in one, several or no possible value for the output. Hence, a given
polynomial actually defines several functions on the domain of its input.
This is for instance the case of the unit circle curve defined by $x^2+y^2-1$.
If we see it as an equation to solve upon $y$, it admits
two solutions when $x \in ]-1,1[$, exactly one for $x=-1$ or $x=1$,
and no solution for other values of $x$. Hence, that curve defines two continuous
functions $y(x)$, each of them with support $]-1,1[$.

To overcome that difficulty, let us call branch point (or branch curve), the set of points
where the number of real roots of an algebraic function changes (${-1,1}$ in the previous
example). Now for a polynomial $P(X,y)$ and a given root $X,y$ that is not a branch point,
the implicit function theorem ensures the existence and uniqueness of the implicit function
up to the next branch point/curve.}}

\begin{example}\label{ex:unitcircle}
 The branch points of the unit circle polynomial $x^2+y^2-1$ are ${(-1,0),(1,0)}$.
 If we provide an additional point on the curve, e.g.~$(0,1)$,
 one can define the function that contains it and that goes from one branch point to the
other one, here:
\begin{align*}
  ]-1,1[ &\rightarrow \R \\
  f: x &\mapsto \sqrt{x^2-1}
\end{align*}
Fig.~\ref{fig:circle} \diff{detailed} in a latter section \diff{\sout{illustrates} shows} the flow diagram used in this
	example by \diff{the CRN synthesized by our} compiler to stabilize that function.

Similarly in the case of the sphere defined by the polynomial $x_1^2+x_2^2+y^2-1$,
the branch curve is the whole unit circle contained in the plane $y=0$. And
giving the point $0,0,-1$ is enough to define the whole surface corresponding to
the down part of the sphere inside the branch-curve circle.
\end{example}

\section{Proof}

\begin{lemma}\label{lmm:<-}
$\F_A \subset \F_S.$
\end{lemma}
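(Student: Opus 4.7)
\textbf{Proof plan for Lemma~\ref{lmm:<-}.} The plan is to take an arbitrary algebraic function $f: I \to \R$ defined by a polynomial $P_f(X,y)$ together with a selection point $(X_0, y_0)$ with $y_0 = f(X_0)$ away from branch points, and to construct explicitly a CRN whose ODE has $(X, f(X))$ as a locally asymptotically stable partial equilibrium over an open, full-dimensional domain, then invoke the Hárs--Tóth / \'Erdi--Tóth realization~\cite{HT79cmsjb,ET89book} to turn that polynomial ODE into a mass action CRN.

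First, I would introduce a single intermediate/output dynamics governed by the polynomial ODE
\begin{equation*}
\frac{dy}{dt} \;=\; -\,P_f(X,y)\cdot \frac{\partial P_f}{\partial y}(X,y),
\end{equation*}
which is manifestly a polynomial vector field in $(X,y)$ and hence falls within the class of PIVPs realizable by elementary CRNs using the standard dual-rail encoding for the necessarily negative monomials. The key observation is that $V(y) := \tfrac12 P_f(X,y)^2$ is a natural non-negative Lyapunov function for fixed $X$, with $\dot V = -\bigl(\partial_y P_f(X,y)\cdot P_f(X,y)\bigr)^2 \le 0$, vanishing exactly at the union of the algebraic curve $\{P_f=0\}$ and the branch locus $\{\partial_y P_f=0\}$. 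Since by definition the chosen branch of $f$ has multiplicity one at every $X\in I$, one has $\partial_y P_f(X,f(X))\ne 0$, so $y=f(X)$ is an isolated zero of the right-hand side and linearization gives $\partial_y(\text{RHS})|_{y=f(X)} = -(\partial_y P_f(X,f(X)))^2 < 0$, yielding local exponential stability.

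Next I would exhibit the full-dimensional domain $\D$ required by Def.~\ref{def:stabilizes}. Since $f$ is continuous on $I$ and the set of branch points and of other real roots of $P_f(X,\cdot)$ is closed, around the graph $\{(X,f(X)) : X\in I\}$ there exists an open tube $\D\subset I\times\R$ on which the chosen root is the unique zero of $P_f(X,\cdot)$ in the $y$-slice and $\partial_y P_f$ keeps a constant nonzero sign; on this tube the Lyapunov argument combined with LaSalle's invariance principle shows that every trajectory with initial condition in $\D$ converges to $f(X)$. This gives a domain of plain dimension $n+1=2$ (one intermediate/output coordinate is enough before realization).

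Finally, I would compile the polynomial ODE into an elementary mass action CRN. The right-hand side is a polynomial in $X$ and $y$; following the classical construction, each monomial becomes a reaction whose rate reproduces it, with sign handled by introducing dual-rail species $y^+, y^-$ (and $X^+, X^-$ if $X$ may be negative) and high-degree monomials handled by quadratization~\cite{HFS20cmsb} at the cost of additional intermediate species $Z$. These transformations are known to preserve the differential semantics on the positive orthant, so the resulting elementary CRN stabilizes $f$ on (the image of) $\D$, establishing $f\in\F_S$. The main delicate point is not the algebra of the construction but the geometric step of carving out a full-dimensional $\D$ that avoids branch points and competing roots while still projecting onto all of $I$; this is where the hypothesis ``multiplicity one'' from the definition of an algebraic function is used crucially, via the implicit function theorem applied along the selected branch.
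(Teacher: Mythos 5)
Your proposal is correct in substance but takes a genuinely different route from the paper. The paper's proof uses the first-order dynamics $\frac{dy}{dt} = \pm \alpha P_f(X,y)$ and spends its effort on choosing the sign: the designated point on the curve determines which branch to stabilize, and the reduced-form (multiplicity-one) hypothesis guarantees that $P_f(X^\star,\cdot)$ changes sign across the root, so that one of the two signs makes $y=f(X^\star)$ attracting on the whole interval $]y^{\text{inf}},y^{\text{sup}}[$ between adjacent roots. You instead take the gradient flow $\frac{dy}{dt} = -P_f\,\partial_y P_f$ of the Lyapunov function $V=\tfrac12 P_f^2$, which stabilizes \emph{every} simple root automatically and eliminates the sign-selection step; the selection point then only serves to pick which tube to call $\D$. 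What each buys: the paper's ODE has degree $\deg P_f$ and its basin in each slice is the full inter-root interval, giving a smaller CRN after quadratization and a larger $\D$; your ODE has degree roughly $2\deg P_f - 1$ (a larger CRN) and introduces spurious equilibria on the locus $\{\partial_y P_f = 0\}$ off the curve --- note that some of these can be genuinely stable (local minima of $P_f^2$ with $P_f\ne 0$, e.g.\ $P_f(y)=(y-1)^2+1$ at $y=1$), so your restriction of $\D$ to a tube where $\partial_y P_f$ keeps a constant nonzero sign is not a formality but essential, and it yields a strictly smaller domain. One small imprecision: ``multiplicity one at $(X,f(X))$'' gives only that the gradient of $P_f$ is nonzero, not that $\partial_y P_f \ne 0$ (the unit circle at $(1,0)$ is a smooth point with $\partial_y P_f=0$); you need the additional standing assumption, implicit in the paper's setup, that $I$ excludes branch points, where vertical tangents occur. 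The final compilation steps (dual-rail, quadratization, monomial-to-reaction) match the paper's.
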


\begin{proof}
Suppose that $f:I \to \R$ is an algebraic real function and let $\alpha P_f$ denote the canonical
   polynomial \diff{(that is the polynomial of lower
   degree)along with its multiplicative constant $\alpha \in \R^+$.}
   We have $\forall X \in I, \diff{\alpha} P_f(X,f(X)) = 0$. Let us choose a vector
$X^\star$ in the domain of $f$.

	Then, \diff{when the inputs are $\forall t, X(t) = X^\star$} the PODE
	\diff{
	\begin{equation}\label{eq:stab}
		\frac{dy}{dt} = \pm \alpha P_f(X(t),y),
	\end{equation}
	}
	is such that \diff{\sout{$y=f(X)$} $y = f(X^\star)$} is an \diff{equilibrium} point. By choosing the sign such that,
	locally \diff{$\pm P_f(X^\star,y)$} is negative above \diff{$y = f(X^\star)$} and positive below, we ensure
that this point is stable.

	The fact that the polynomial has to change the sign across the \diff{equilibrium} point is
	due to the fact that \diff{\sout{we choose the polynomial of minimal degree} the
	canonical polynomial is of minimal degree}, hence it
has to be in reduced form and the multiplicity of every branch of the curve is
one: the sign cannot be the same on both sides of the curve.

	It is worth remarking that any ODE system made of elementary mathematical
	functions can be transformed in a polynomial ODE system~\cite{HFS21cmsb}, hence one can wonder why we restrict here
	to polynomial expressions. This comes from the condition that asks that the
	domain $\D$ be of plain dimension in Def.~\ref{def:stabilizes}.
    The polynomialization of an ODE system may indeed introduce constraints
	between the initial concentrations which is precisely what is forbidden by the
	requirement upon $\D$.

	\diff{
\sout{
Now, let us note $Y^+ = \inf(Y\ |\ P(X,Y)=0, Y>f(X))$ and $Y^- = \sup(Y\ |\ P(X,Y)=0, Y<f(X))$,
with $\pm \infty$ values if the set is empty.
We know that for all $y$
in $]Y^-, Y^+[$, the only attractor is $f(X)$ and as a polynomial can only have
a finite number of zeros, those sets are non empty.
}

	Now, let us note $y^\text{sup} = \min(y\ |\ P(X^\star,y)=0, y>f(X^\star))$ and
	$y^\text{inf} = \max(y\ |\ P(X^\star,y)=0, y<f(X^\star))$, with $\pm \infty$ values if
	the sets are empty. We know that $]y^\text{inf}, y^\text{sup}[$ is not empty because a
	polynomial can only have a finite number of zeros. Moreover, for all $X^\star$, the only
	attractor in the interval verifies $y = f(X^\star)$. This concludes the proof if we
    simply want to construct a PODE.
	}

    \diff{If we want to construct an elementary CRN, we have to tackle two difficulties, similarly to~\cite{FLBP17cmsb}.
    First the variables may become negative while concentration cannot. And second, we
    have to quadratize the PODE in order to obtain elementary reactions.}
For all variables that are not bound to be positive, the dual-rail encoding
consists in splitting the variable into two positive variables corresponding to
the positive and negative parts $x = x^+ - x^-$. Then, all positive monomials
	can be dispatched to the \diff{derivative of the} positive part and all negative ones to the
	\diff{derivative of the} negative
	part (but with a positive sign). \diff{\sout{with the addition of a}We finally add the}
   catalytic degradation \diff{$x^+ + x^- \xrightarrow{\text{fast}} \emptyset$}, with a
rate: \diff{$\text{fast}$} elaborated to ensure that the product $x^+ x^-$ is bounded by a fixed
constant, this imply that one of the two variables ($x^+$ or $x^-$) stays
close to zero as described in~\cite{FLBP17cmsb} \diff{(Definition 8 and Theorem 3.)}.
\diff{\sout{This construction do}Crucially, this construction does} not modify
	the behavior of the variable $x$, ensuring that all stable \diff{\sout{fixed}} points of the
        original system are preserved.
        \sout{It is worth noting that that dual-rail encoding
may be necessary even for positive functions when the auxiliary variables can take
negative values.}

\end{proof}

\begin{lemma}\label{lmm:->}
$\F_S \subset \F_A.$
\end{lemma}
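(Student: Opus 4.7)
The plan is to derive from the stabilization hypothesis a non-trivial polynomial relation between the inputs and the output, using elimination theory applied to the algebraic variety of equilibria.

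First, I would write out the differential semantics: mass-action kinetics gives polynomial ODEs $\frac{dy}{dt}=f_y(X,y,Z)$ and $\frac{dZ}{dt}=f_z(X,y,Z)$ with coefficients in $\R$. For each input value $X^\star\in I$, I would invoke the AFR characterization just established in the previous proposition to produce a partial stable equilibrium point $(X^\star,f(X^\star),Z^\star)$ lying in an open subset of $\D$. Then I would appeal to the local exponential stability of such stabilizing fixed points --- precisely the result of Fletcher~\& al.~recalled in the introduction --- to strengthen this into a full hyperbolic equilibrium: $f_y=0$, $f_z=0$, and the Jacobian of $(f_y,f_z)$ with respect to $(y,Z)$ is non-singular at the point.

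With this in hand, I would consider the complex algebraic variety $V=V(f_y,f_z)\subset\C^{m+n+1}$. The Jacobian criterion forces the irreducible component of $V$ through the hyperbolic equilibrium to have dimension exactly $m$. Its projection $\pi:(X,y,Z)\mapsto(X,y)$ is therefore at most $m$-dimensional, so its Zariski closure is a proper algebraic subvariety of $\C^{m+1}$, hence contained in the zero set $V(P)$ of some non-zero polynomial $P\in\R[X,y]$ belonging to the elimination ideal $\langle f_y,f_z\rangle\cap\R[X,y]$ (computable by resultants or Gr\"obner bases). For $X^\star$ near the reference point, the implicit function theorem yields a smooth branch $X\mapsto(f(X),Z^\star(X))$ of $V$, so that $P(X,f(X))=0$ locally; using continuity of $f$ and a connected-component extension on $I$, this identity propagates to all of $I$. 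Passing to the irreducible factor of $P$ whose zero locus contains the graph of $f$ and restricting $I$ to exclude its measure-zero singular locus produces the canonical polynomial $P_f$ witnessing $f\in\F_A$.

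The main obstacle I anticipate is the dimension step: without hyperbolicity, $V$ could acquire positive-dimensional fibers over $X$ --- for instance if $\omega$-limit sets in $Z$ failed to collapse to a point --- making the projection dense in $\C^{m+1}$ and trivializing the elimination ideal. This is exactly why invoking the local exponential stability of stabilizing equilibria, jointly with the AFR reformulation, is essential: together they deliver a non-degenerate Jacobian at every $X^\star\in I$ and hence the codimension needed for the elimination argument to produce a non-zero witness $P_f$.
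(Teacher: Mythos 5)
Your proposal takes a genuinely different route from the paper --- classical elimination theory on the algebraic variety of equilibria $V(f_y,f_z)$, rather than the differential-algebraic elimination the paper uses --- but it rests on two assumptions that do not follow from the definition of stabilization, and the second obstacle you flag is in fact fatal to your setup rather than resolved by it. First, you need the intermediate species to sit at a genuine equilibrium ($f_z=0$) over each $X^\star$. Definition~\ref{def:stabilizes} only requires $\lim_{t\to\infty}y(t)=f(X^0)$; the paper explicitly warns that the $Z$ dynamics may oscillate or diverge, which is precisely why it introduces the weaker notion of \emph{partial} equilibrium (only $y$ is constant along the trajectory). The AFR proposition you invoke delivers exactly such a partial equilibrium, not a zero of $(f_y,f_z)$, so your variety $V$ may simply contain no point above $X^\star$. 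Second, even granting a full equilibrium, asymptotic stability does not imply hyperbolicity (consider $\dot y=-y^3$), and the Fletcher~et~al.\ result on local exponential stability concerns the equilibrium points of their own Lyapunov construction --- the present paper only borrows it in the error-control section, under the extra multiplicity-one hypothesis on $P_f$, which is not available here since $P_f$ is what you are trying to produce. Without a non-singular Jacobian your dimension count, and hence the non-triviality of the elimination ideal, collapses.

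The paper's proof sidesteps both issues by eliminating the intermediate variables \emph{differentially}: Theorem~3 of~\cite{CPSW05ejde} gives a single polynomial relation $P(X,y,y^{(1)},\ldots,y^{(n)})=0$ holding along every trajectory, involving only the output and its time-derivatives, with no reference to where $Z$ ends up. Substituting the partial-equilibrium conditions $y=f(X)$, $y^{(k)}=0$ yields $P^\star(X,f(X))=0$, and the degenerate case $P^\star\equiv 0$ is excluded by the plain-dimension requirement on $\D$ --- the same requirement you would need, but deployed at a point where it actually bites. If you want to rescue your approach, you would have to first prove that stabilization forces $Z$ to converge to an isolated hyperbolic rest point, which is strictly stronger than what is assumed.
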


\begin{proof}
	Let us suppose that $f$ is a function stabilized by a mass action CRN \diff{interpreted
	through the differential semantics}.
	The idea is to use the characterization of functions that are \diff{\sout{projectively
	polynomial.} solution of a
   Polynomial Initial Value Problem given in~\cite{CPSW05ejde} (called projectively
   polynomial in the vocabulary of the paper).
	Indeed, Theorem~3 in that paper states that if a function $u$ is the solution of a PIVP, then there
	exist $n \in \mathbb{N}$ and a polynomial $Q$ of $n+1$ variables such that:
	$Q(u, u', \ldots , u^{(n)}) = 0$. The proof rely on the successive elimination of the
	other variables of the PODE until the equations are only about $u$ and its derivatives. If
	$n$ variables are thus eliminated, $Q$ will have (at most) $n+1$ variables.}

	\diff{\sout{By using the higher-order derivatives of
the stabilized variable, it is shown in~\cite{CPSW05ejde} that one can eliminate all the auxiliary
	variables and}By invoking this theorem on the output variable, to eliminate all the
	intermediate variables, we} obtain a single equation of the form:
	\begin{equation} \label{eq:y_charac}
		P(X, y, y^{(1)}, \ldots, y^{(n)}) = 0.
	\end{equation}
	\diff{Here $P$ is the characterization of $y$ extracted from the system but not yet the
	polynomial defining the algebraic curve of $f$.}

	\diff{\sout{Using the fact that for all $X$, $y=f(X)$ is a partial stable point by definition, if we use it
	as initial condition we immediately get:} Then, the idea is to use the fact that for
	all $X \in I$, $f(X)$ is a partial equilibrium point of the system, that is:}
\begin{align*}
  X &= X, \\
  y &= f(X), \\
	y^{(k)} &= 0 \quad \forall k \in [1,n].
\end{align*}
	\diff{(Recall that we cannot use the term of equilibrium point as we do not know the
	behaviour of the intermediate species.)}

	\diff{\sout{Injecting this in the characterization of the function $y$} Injecting this
	definition of a partial equilibrium point in the characterization of $f$
	(equation~(\ref{eq:y_charac}))}, we obtain:
\begin{equation}
  \forall X, P^\star(X, f(X)) = 0,
\end{equation}

There are now two cases. Either $P^\star$ is not trivial and effectively
	defines the surface of \diff{equilibrium} points: this gives a polynomial for $f$, hence $f$
is algebraic. Either $P^\star$ is the uniformly null polynomial. But in this
	case, every points in the $X,y$ plane may be a \diff{stable} point and the domain $\D$
of the definition of stabilization is reduced to a single point, yet we asked it
to be of non-null measure. Therefore, $P^\star$ is not trivial and $f$ is algebraic.
\end{proof}

\section{Compilation pipeline for generating stabilizing CRNs} \label{seq:compilation}

The proof of lemma~\ref{lmm:<-} is constructive and provides a method 
to transform any algebraic function defined by a polynomial and one point,
in an abstract CRN that stabilizes it.
This is implemented with a
command

\verb$stabilize_expression(Expression, Output, Point)$\\
with three arguments:
\begin{description}
\item[{\tt Expression}:] For a more user friendly interface, we accept in input more
	general algebraic
expressions than polynomials; the non polynomial parts are detected
and transformed by introducing new variable/species to compute their values;
\item[{\tt Output}:] a name of the Output species different from the input;
\item[{\tt Point}:] a point on the algebraic curve that is used to
	determine the branch of the curve to stabilize \diff{\sout{if several exist} by
		choosing the sign in front of $\frac{dy}{dt}$}.
\end{description}

Similarly to our previous pipeline for compiling any elementary function in an
abstract CRN that computes it~\cite{HFS21cmsb,HFS20cmsb,FLBP17cmsb}, 
our compilation pipeline for generating stabilizing CRNs follows the same sequence of symbolic transformations: 
\begin{enumerate}
\item polynomialization
\item stabilization
\item quadratization
\item dual-rail encoding
\item CRN generation
\end{enumerate}
yet with some important differences.

\subsection{Polynomialization}

This optional step has been added just to obtain a more
user friendly interface, since polynomials may sometimes be cumbersome to manipulate.
The first argument thus admits algebraic expressions instead of
being limited to polynomials.

The rewriting simply consists in detecting all the non-polynomial terms of the form
$\sqrt[a]{b}$ or $\frac{a}{b}$ in the initial
expression and replace them by new variables, hence obtaining a polynomial.

Then to compute the variables that just have been introduced, we perform
the following basic operations on each of them to recover polynomiality:
\begin{align*}
n = \sqrt[a]{b} &\mapsto n^a-b \\
n = \frac{a}{b} &\mapsto n\diff{\cdot}b-a
\end{align*}
and recursively call \verb|stabilize_expression| on these new expressions
with the introduced variable (here $n$) as desired output.

\subsection{Stabilization}
To select the branch of the curve to stabilize,
it is sufficient to choose the sign in front of the polynomial in equation~(\ref{eq:stab}).
such that at the designated point, the second derivative
of $y$ is positive. For this, we use a formal derivation to compute the sign of
the polynomial, and \diff{reverse the sign} if necessary.

\subsection{Quadratization}

The quadratization of the PODE is an optional transformation which aims at generating
elementary reactions, i.e.~reactions having at most two reactants each,
that are fully decomposed and more amenable to concrete implementations with real biochemical molecular species.
It is worth noting that the quadratization problem to solve
here is a bit different from the one of our original pipeline studied
in~\cite{HFS20cmsb} since we want to preserve a different property of the PODE.
It is necessary here that the introduced variables stabilize on their target value
independently of their initial concentrations.
While it was possible in our previous framework to initialize the different
species with a precise value given by a polynomial of the input, this is no more the case here as
the domain $\D$ has to be of plain dimension. 

The variables introduced by quadratization correspond to monomials of order higher than $2$ that can
thus be separated as the product of two variables corresponding to monomials of
lower order: $A$ and $B$. Those variables can be either present in the original
polynomial or introduced variables.
The following set of reactions:
$$A+B \rightarrow A+B+M$$
$$M \rightarrow \emptyset,$$
gives the associated ODE:
\begin{equation}\frac{dM}{dt} = A\diff{\cdot}B-M, \label{eq:monomial_computation}\end{equation}
for which the only stable point satisfies: $M = A\diff{\cdot}B$.

Furthermore as before, we are interested in computing a quadratic PODE of minimal dimension.
In~\cite{HFS20cmsb}, we gave an algorithm in which the introduced variables were always equal to the monomial they compute,
whereas in our online stabilization setting, this is true only when $t \rightarrow \infty$.
For instance, if we replaced $A\diff{\cdot}B$ by $M$ in
equation~(\ref{eq:monomial_computation}), the system would no longer adapt to changes of the input.
To circumvent this difficulty, it is 
possible to modify the PIVP and use it as input of our previous algorithm to
take this constraint and still obtain the minimal set of variables.
In our previous computation
setting, the derivatives of the different variables where simply the
derivatives of the associated monomials computed in the flow generated by the
initial ODE.
In Alg.~\ref{algo:hack}, we construct \diff{an auxiliary ODE} containing twice as many
variables, the derivatives of which being built to ensure
that the solution is correct. The idea is that the actual variables are of the
form $Mb$ and the $Mb^2$ variables exist only to construct the solution. To compute
quadratic monomials with a
$b^2$ term present in the derivatives of the $Mb$ variables (the
``true'' variables),
one can either add two $Mb$ variables to the solution set or add
a single $Mb^2$ variable. As can be seen on the lines 5 and 9 of Alg.~\ref{algo:hack}, $Mb^2$ variables
require that the corresponding $Mb$ is in the solution set.

\begin{algorithm}
\caption{Quadratization algorithm for a PODE stabilizing a function.
The $minimal\_quadratic\_set(PODE, y)$ returns the
minimal set of variables containing $y$ sufficient to express all its derivatives 
in quadratic form~\cite{HFS20cmsb}.\label{algo:hack}}
\textbf{Input}: A PODE of the form $\frac{dx_i}{dt} = P_i(X)$, with $i \in [1,n]$ to compute $x_n$\\
\textbf{Output}: A set $S$ of monomials to quadratize the input.
\begin{algorithmic}[5]
\State $ODE_\text{aux} \gets \emptyset$
\State find an unused variable name: $b$
\ForAll{$i \in [1,n]$}
  \State add $\frac{dx_ib}{dt} = P_i(X) \diff{\cdot} b^2$ to $ODE_\text{aux}$
  \State add $\frac{dx_ib^2}{dt} = x_ib$ to $ODE_\text{aux}$
\EndFor
\State $AllMonomials \gets$ the set of monomials that are less or equal to a
monomial present in one of the $P_i$ and not in $X$.
\ForAll{$M \in AllMonomials$}
  \State add $\frac{dMb}{dt} = Mb^2$ to $ODE_\text{aux}$
  \State add $\frac{dMb^2}{dt} = Mb$ to $ODE_\text{aux}$
\EndFor
\State $S_\text{aux} \gets minimal\_quadratic\_set(ODE_\text{aux}, x_n b)$
\State $S \gets \emptyset$
\ForAll{$Mb \in S_\text{aux}$}
  \State add $M$ to $S$
\EndFor
\State \textbf{return} $S$
\end{algorithmic}
\end{algorithm}

This variant of the quadratization problem studied in~\cite{HFS20cmsb} has the same
theoretical complexity, as shown by the following proposition:

\begin{theorem}
The quadratization problem of a PODE for stabilizing a function and minimizing the number of variables is NP-hard.
\end{theorem}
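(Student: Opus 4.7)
The plan is to reduce from the standard quadratization problem of~\cite{HFS20cmsb}, which is already known to be NP-hard. Algorithm~\ref{algo:hack} establishes one direction (stabilization quadratization reduces to standard quadratization via the construction of an auxiliary ODE of size linear in the input), which gives membership in the same complexity class but does not give NP-hardness. For the hardness direction I need the converse: encode an arbitrary instance of standard quadratization as an instance of stabilization quadratization whose optimum value determines the original optimum.

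The first step is to recall the reduction used in~\cite{HFS20cmsb}, which maps a combinatorial instance (e.g.\ a variant of set cover or 3-SAT) to a PODE $\frac{dx_i}{dt}=P_i(X)$ whose optimal quadratic monomial set encodes the combinatorial optimum. The second step is to modify this PODE minimally so that it becomes the differential semantics of a CRN that stabilizes some (trivial) algebraic function on a plain-dimensional domain~$\D$. A natural construction is to augment the system with fresh output and input species $y,x$ together with reactions implementing $\frac{dy}{dt}=\pm P_f(x,y)$ for some fixed, low-degree algebraic target (for example $y=x$), and to couple this sub-CRN to the original PODE only through catalytic terms that do not affect either system's monomials beyond a bounded, computable set. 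The catalytic coupling must be tuned so that the stabilization of $y$ is preserved and so that no non-trivial partial equilibrium is lost.

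The third step is to prove that the minimal stabilization quadratic set of the combined system equals the minimal standard quadratic set of the original PODE plus a constant (or at worst polynomially bounded) overhead coming from the fixed stabilization gadget. Here I would use Alg.~\ref{algo:hack} applied in reverse as an analysis tool: since the algorithm shows that stabilization quadratization of the combined system is exactly $minimal\_quadratic\_set$ applied to an explicit auxiliary ODE, I can track how monomials of the original $P_i$ must reappear (in the form $x_i b$, $x_i b^2$) in any optimal solution. Reading off these monomials then recovers a feasible, optimal standard quadratization of the original PODE.

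The main obstacle is the third step: the relaxation from standard to stabilization quadratization (introduced variables only need to reach their target at equilibrium, not during transients) could in principle allow spurious savings that destroy the correspondence between the two optima. Controlling this requires that the gadget's coupling forces every monomial appearing in the original $P_i$ to still be represented, up to the fixed additive overhead. I expect this to be achievable because the $b^2$ doubling trick in Alg.~\ref{algo:hack} already exhibits a tight structural link between the two problems; the work is in showing that no stabilization-specific shortcut can cut the monomial count below the standard-quadratization lower bound on the constructed instances.
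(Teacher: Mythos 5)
Your plan has a genuine gap, and it sits exactly where you flag it: step three. Reducing from the standard quadratization problem of~\cite{HFS20cmsb} requires you to prove that, on your constructed instances, no ``stabilization-specific shortcut'' (an introduced variable that only needs to equal its monomial at equilibrium rather than along the whole trajectory) can beat the standard-quadratization optimum. You state that you expect this to be achievable but give no argument, and nothing in your gadget construction forces it; the catalytic coupling is left ``to be tuned'' and Alg.~\ref{algo:hack} cannot serve as the analysis tool you want, since the paper itself notes that the algorithm is \emph{not} guaranteed to return optimal solutions because of the auxiliary $Mb^2$ variables. As written, the entire content of the theorem --- that the relaxed, equilibrium-only notion of quadratization is still hard to optimize --- is deferred to the unproven step.

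The paper's proof avoids this comparison between the two quadratization variants altogether by reducing directly from vertex cover, reusing the gadget of~\cite{HFS20cmsb} but analyzing it natively in the stabilization setting. Given a graph $G=(V,E)$, one takes the single stabilizing ODE
\begin{equation*}
\frac{dy}{dt} \;=\; \sum_{v_i v_j \in E} v_i \cdot v_j \cdot a \;-\; y,
\end{equation*}
with inputs $V \cup \{a\}$ and output $y$. Any quadratization, in either sense, must represent each degree-$3$ monomial $v_i v_j a$ as a product of at most two species, hence must introduce a variable stabilizing on $v_i a$, $v_j a$ or $v_i v_j$ for every edge; picking $v_i$ from a monomial $v_i a$ and either endpoint from a monomial $v_i v_j$ yields a vertex cover of size at most the number of introduced variables, and conversely a cover yields a feasible quadratization, so the optima coincide. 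This one-instance-family argument is both simpler and complete. If you insist on your route, the honest fix is to instantiate your reduction precisely on these vertex-cover instances and prove the ``no shortcut'' claim there --- at which point you will have reconstructed the paper's direct reduction.
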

\begin{proof}
The proof proceeds by reduction of the vertex covering of a graph as in~\cite{HFS20cmsb},
	theorem~2.
Let us consider the graph $G=(V,E)$ with vertex set $v_i, i \in [1,n]$ and edge set $E \in V\times V$.
And let us study the quadratization of the PODE with input variables $V \cup
\{a\}$ and output variable $y$ such that the $y$ computes
$\sum_{v_i v_j \in E} v_i \diff{\cdot}v_j \diff{\cdot}a$.
	The (only) derivative is:
	\begin{equation}\label{eq:proof_NP}
\frac{dy}{dt} = \sum_{v_i v_j \in E} v_i \diff{\cdot}v_j \diff{\cdot}a - y.
\end{equation}
	\diff{\sout{An optimal quadratization contains variable corresponding either to $v_i a$ or
$v_i v_j$ indicating that an optimal covering of the graph $G$ contains either
the node $v_i$ either indifferently $v_i$ or $v_j$. Hence en optimal
	quadratization gives us an optimal covering which concludes the proof.}
	The idea is then to derive an optimal covering of $G$ from a quadratization of
	(\ref{eq:proof_NP}) minimizing the number of species.
	That quadratization is composed of monomials of the form $v_i a$ or $v_i v_j$.
	Let us start with an empty set $S$, for each monomial of the form $v_i a$ we add the
	nodes $v_i$ to the set while for each monomial of the form $v_i v_j$ we add either
	$v_i$ or $v_j$ to $S$. Then $S$ defines an optimal covering of $G$ as by construction
	of (\ref{eq:proof_NP}), each edge has one of its nodes in $S$ and $S$ is minimal
	otherwise there would exist a better quadratization. This proves that
	there exist a reduction from vertex covering to the quadratization and thus that
	finding an optimal monomial quadratization is NP-hard.}
\end{proof}

\diff{It is worth remarking that the proof no longer works if the introduction of variables corresponding to
polynomials is allowed. Indeed the system (\ref{eq:proof_NP}) can be
quadratized with a single new variable: $\sum_{v_i v_j \in E} v_i \diff{\cdot}v_j$. The
complexity of quadratization using variables for polynomials is still an open problem.}

\diff{The encoding of this problem in MAXSAT given in~\cite{HFS20cmsb} and the solution preserving heuristics described in~\cite{HFS21cmsb}
still here} with the slight modification
mentioned above concerning the introduction of new variables. 
Alg.~\ref{algo:hack}, when invoked with an optimal search for
$minimal\_quadratic\_set$, is nevertheless not guaranteed to generate optimal solutions,
because of the \diff{\sout{pseudo}auxiliary} variables noted $Mb^2$.
Despite those theoretical limitations, the CRNs generated by Alg.~\ref{algo:hack} are particularly concise,
as shown in the example section below and already mentioned above for the compilation of algebraic expressions compared to~\cite{BEHL09al},

\subsection{Lazy dual-rail encoding}
\diff{Our CRN encodes variables using the concentrations of chemical species that are, by
nature, positive quantities.}
As in our original compilation pipeline~\cite{FLBP17cmsb},
it is \diff{thus} necessary to modify our PODE in order to impose that no variable may
become negative. This is possible through a lazy version of dual-rail encoding.
First by detecting the variable that are or may become negative
and then by splitting them between a positive and negative part, thus
implementing a dual-rail encoding of the variable: $y = y^+ - y^-$.
Positive terms of the
original derivative are associated to the derivative of $y^+$ and negative terms
to the one of $y^-$ and a fast mutual degradation term is finally associated to both
derivative in order to impose that one of them stays close to zero~\cite{FLBP17cmsb}.

\subsection{CRN generation}
The same back-end compiler as in our original pipeline
is used,
i.e.~introducing one reaction for each monomial.
It is worth remarking that this may have for effect to aggregate in one reaction
several occurrences of a same monomial in the ODE system~\cite{FGS15tcs}.

\section{Examples}

\begin{example}\label{ex:circle}
	As a first example, we can study the unit circle: $x^2+y^2-1$. \diff{The stabilizing
	CRN for the upper part of the circle can be obtain in Biocham by invoking the following command:}
   \begin{center}
\begin{verbatim}stabilize_expression(x^2+y^2-1, y, [x=0, y=1]).
\end{verbatim}
   \end{center}

	Our pipeline gives us \diff{\sout{for the upper part of the circle,}} the following CRN.
\begin{gather}
\begin{aligned}
  \emptyset &\rightarrow y^+ &\quad
2 \diff{\cdot}y^- &\rightarrow 3 \diff{\cdot}y^- \\
  2 \diff{\cdot}x &\rightarrow y^- + 2 \diff{\cdot}x &\quad
2 \diff{\cdot}y^+ &\rightarrow y^- + 2 \diff{\cdot}y^+ \\
y^+ + y^- &\xrightarrow{\text{fast}} \emptyset
\end{aligned}
\end{gather}
the flow of the PODE associated to this model can be seen in figure~\ref{fig:circle}\textbf{A} and the
steady state is depicted in figure~\ref{fig:circle}\textbf{B} as a function of
$x$ in the positive quadrant.

\end{example}

\begin{figure}
\centering
\textbf{A.}\includegraphics[width=0.4\textwidth]{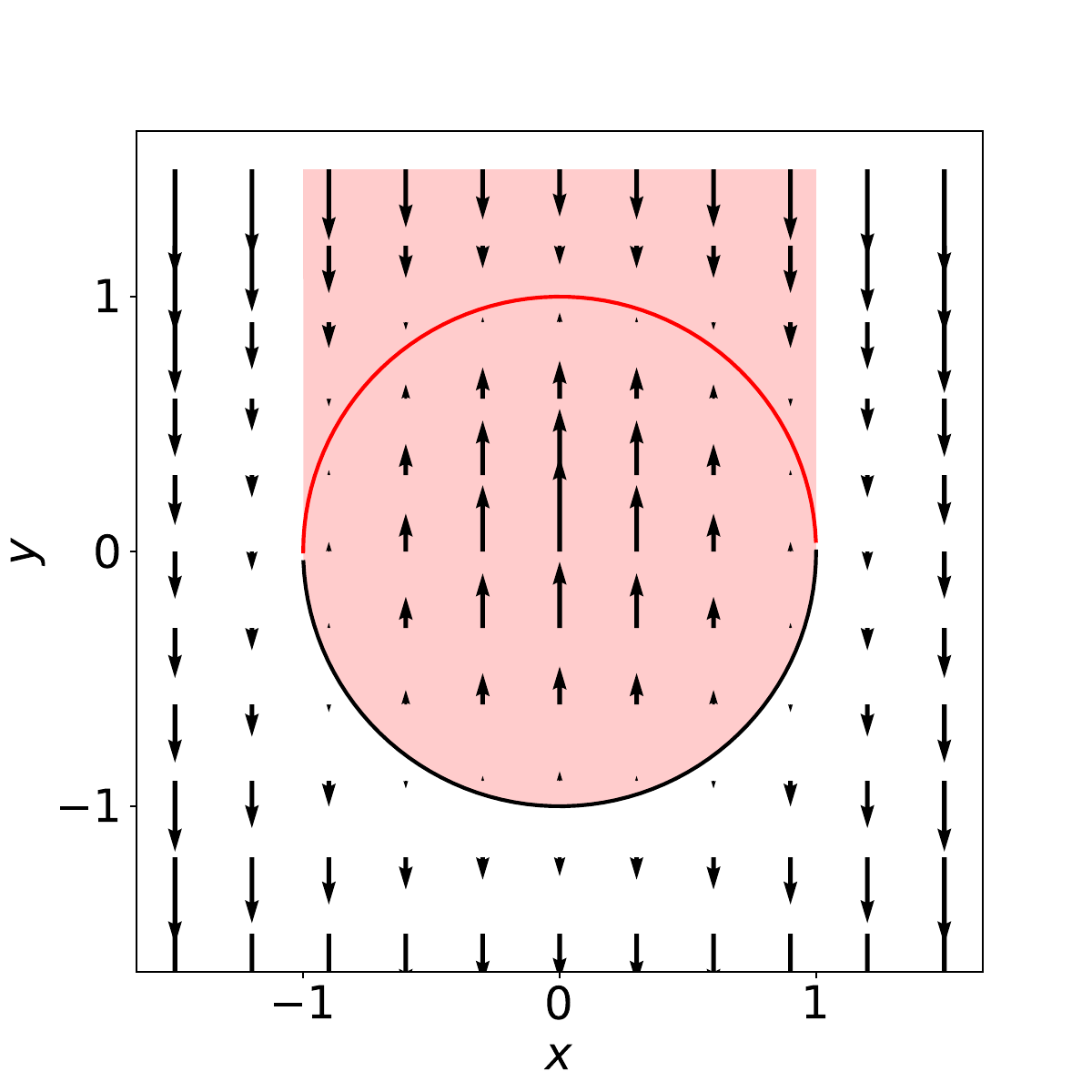}
\textbf{B.}\includegraphics[width=0.5\textwidth]{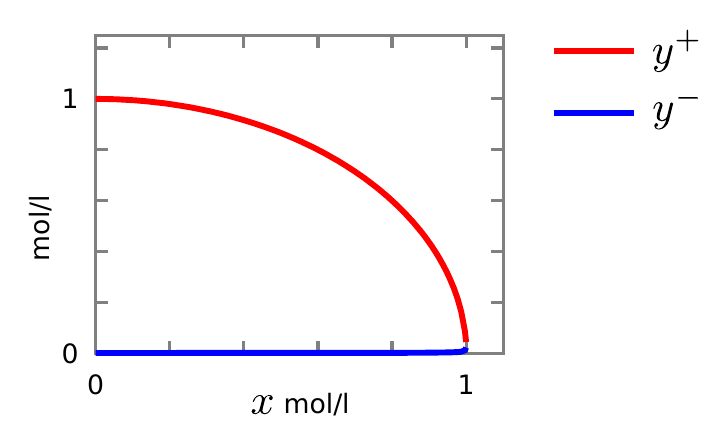} \quad
\caption{
\label{fig:circle}
\textbf{A.} Flow diagram in the $x,y$ plane before the dual rail encoding for
the stabilization of the unit circle. The arrows indicate the direction and strength
of the flow. The upper part of the curve (in red) indicates the stable branch
of the system and we colored in light red the domain $\D$ in which the system
will reach the desired steady state. Outside of $\D$, the system is driven to the
divergent state $\lim y \rightarrow -\infty$.
\textbf{B.} Dose response diagram of the generated CRN where the input
concentration ($x$) is gradually increased from $0$ to $1$ while recording the steady
state value of the output species $y^+, y^-$.
}
\end{figure}

\begin{example}\label{ex:serpentine}
Even rather involved algebraic curves need surprisingly few species and
	reactions \diff{to be stabilized}.
This is the case of the serpentine curve, or anguinea,
defined by the polynomial
$(y-2) \left((x-10)^2+1\right) = 4 (x-10)$ for which we choose the point $x=10, y=2$
to enforce stability.
	The compilation process takes less than 100ms on a typical laptop\footnote{An Ubuntu 20.04, with
	an Intel Core i6, $2.4$GHz x $4$ cores and $15.5$GB of memory.}.
The generated CRN reproduces the anguinea curve on the $y$ variable, as shown in Fig.~~\ref{fig:anguinea}.
It is composed of the following $4$ species and
$12$ reactions:

\begin{gather}
\begin{aligned}
\label{model:anguinea}
  y_m+y_p &\xrightarrow{\text{fast}} \emptyset, &\quad
 2 \diff{\cdot}x &\rightarrow z+2 \diff{\cdot}x,\\
  z &\rightarrow \emptyset, &\quad
 \emptyset &\xrightarrow{162} y_p,\\
  y_m &\xrightarrow{101} \emptyset, &\quad
 x+y_p &\xrightarrow{20} x+2 \diff{\cdot}y_p,\\
  z+y_m &\rightarrow z, &\quad
 z &\xrightarrow{2} z+y_p,\\
  x &\xrightarrow{36} x+y_m, &\quad
 y_p &\xrightarrow{101} \emptyset,\\
  x+y_m &\xrightarrow{20} x+2 \diff{\cdot}y_m, &\quad
 z+y_p &\rightarrow z.
\end{aligned}
\end{gather}
\end{example}

\begin{figure}
\centering
\includegraphics[width=0.7\textwidth]{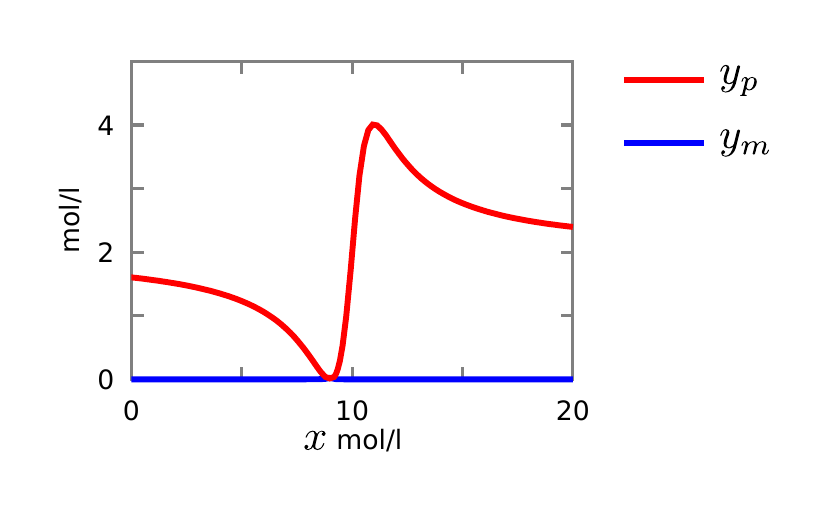}
\caption{\label{fig:anguinea}
	Dose-response diagram of the \diff{model (\ref{model:anguinea}), the} CRN generated by compilation of the serpentine algebraic curve,
 $(y-2) \left((x-10)^2+1\right) = 4 (x-10)$,
 with $x$ as input and $y$ as output.
}
\end{figure}

\begin{example}\label{ex:Bring}
In the field of real analysis, the Bring radical of a real number $x$ is defined
as the unique real root of the polynomial: $y^5+y+x$. The Bring radical is an
algebraic function of $x$ that cannot be expressed by any algebraic expression.

The stabilizing CRN generated by our compilation pipeline is composed of
$7$ species (${y_m, y_p, y2_m, y2_p, y3_m, y3_p, x}$),
\diff{where $y2$ (resp. $y3$) is
	the variable introduced to stabilized $y^2$ (resp. $y^3$) and
  the $p$ and $m$ subscripts represent the positive and negative parts of a signed variable,} and $20$ reactions presented in
  model~\ref{model:Bring}.
A dose-response diagram of that CRN
is shown in Fig.~\ref{fig:bring}.

\begin{gather}
\begin{aligned}
\label{model:Bring}
  y_m+y_p &\xrightarrow{\text{fast}} \emptyset, &\quad
  y2_m+y2_p &\xrightarrow{\text{fast}} \emptyset, \\
  y3_m+y3_p &\xrightarrow{\text{fast}} \emptyset, &\quad
  y_p &\xrightarrow{} \emptyset, \\
  y2_m+y3_p &\xrightarrow{} y2_m+y3_p+y_p, &\quad
  y2_p+y3_m &\xrightarrow{} y2_p+y3_m+y_p, \\
  x &\xrightarrow{} x+y_m, &\quad
  y_m &\xrightarrow{} \emptyset, \\
  y2_p+y3_p &\xrightarrow{} y2_p+y3_p+y_m, &\quad
  y2_m+y3_m &\xrightarrow{} y2_m+y3_m+y_m, \\
  2 \cdot y_p &\xrightarrow{} y2_p+2 \cdot y_p, &\quad
  2 \cdot y_m &\xrightarrow{} y2_p+2 \cdot y_m, \\
  y2_p &\xrightarrow{} \emptyset, &\quad
  y2_m &\xrightarrow{} \emptyset, \\
  y2_p+y_p &\xrightarrow{} y2_p+y3_p+y_p, &\quad
  y2_m+y_m &\xrightarrow{} y2_m+y3_p+y_m, \\
  y3_p &\xrightarrow{} \emptyset, &\quad
  y2_p+y_m &\xrightarrow{} y2_p+y3_m+y_m, \\
  y2_m+y_p &\xrightarrow{} y2_m+y3_m+y_p, &\quad
  y3_m &\xrightarrow{} \emptyset.
\end{aligned}
\end{gather}
\end{example}
   
\begin{figure}
\centering
\includegraphics[width=0.7\textwidth]{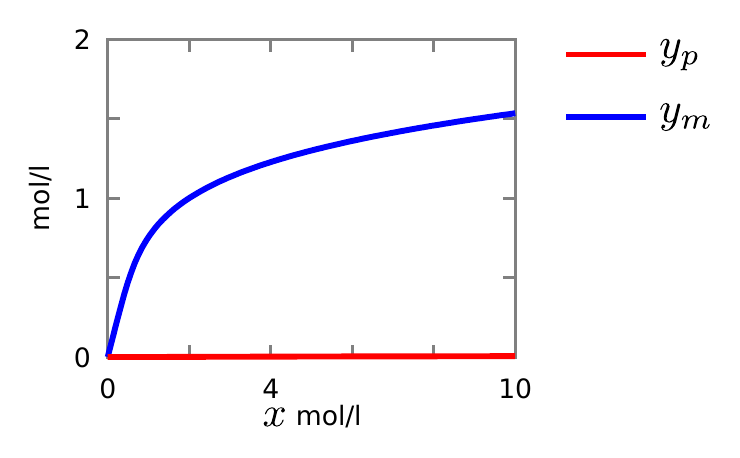}\\
\caption{
\label{fig:bring}
	\diff{Dose response of the model (\ref{model:Bring}).}
The bring radical is the real root of the polynomial equation $y^5+y+x=0$. As
this quantity is negative, the output is read on the negative part of the output
$y_m$. It is the
simplest equation for which there is no expression for $y$ as a function of $x$.
}
\end{figure}

\begin{example}\label{ex:hill5}
To generate the CRN that stabilize the Hill function of order 5, we can use the
expression $y-\frac{x^5}{1+x^5}$ along with the point $x=1, y=\frac{1}{2}$.
Our compilation pipeline generates the following model with $6$ species and $10$
	reactions:\\
	\begin{center}
\begin{tabular}{ll}
	$2 \diff{\cdot}x \rightarrow z_1+2 \diff{\cdot}x$, &
   $2 \diff{\cdot}z_1 \rightarrow z_2+2 \diff{\cdot}z_1$, \\
	$x+z_2 \rightarrow x+z_2+z_3$, &
	$z_4+z_3 \rightarrow z_3+y$, 
\end{tabular}\\
\begin{tabular}{lll}\label{model:hill5}
	$\emptyset \rightarrow z_4$, &
   $z_1 \rightarrow \emptyset$, &
	$z_2 \rightarrow \emptyset$, \\
	$z_3 \rightarrow \emptyset$, &
	$z_4 \rightarrow \emptyset$, &
	$y \rightarrow \emptyset$,
\end{tabular}
	\end{center}
all kinetics being mass action law with unit rate.
The $z's$ are auxiliary variables corresponding to the following expressions:
$$z_1 = x^2, \quad
z_2 = x^4, \quad
z_3 = x^5, \quad
z_4 = \frac{1}{1+x^5}.
$$

The production and degradation of $z_4$ may be surprising, but looking at all
the reactions implying both $z_4$ and $y$, we can see that their sum follow the
equation $\frac{d(z_4+y)}{dt} = 1-(z_4+y)$ hence ensuring that the sum of the
	two is \diff{\sout{fixed independently of} driven back to $1$, independently of} their initial concentrations.
It is worth remarking that another way of
reaching the same result would be to directly build-in conservation laws into
our CRN, hence using both steady state and invariant laws to define our steady
state which however would make us sensitive to the initial concentrations.
\end{example}

\section{Error control}\label{sec:error}

An important aspect of our framework that we still do not have talked about is the ability
to determine the error between the target function and the output of our synthesized CRN.
Upon this section, we will adopt the following point of view. \diff{On the one hand}, the system is
supposed to be noiseless, meaning that both the auxiliary and the output variables exactly
follow the PODE without any perturbations. \diff{On the other hand}, the inputs are supposed to
be driven by the environment in a way that we do not know in advance, thus fully
determining the target function $f(X(t))$. We would like to have an estimation of the
error between the output of the CRN and the target function\diff{:
\begin{equation}
   \epsilon(t) = y(t) - f(X(t)).
\end{equation}}

Of course, this is possible only \diff{\sout{with} by assuming} some regularity property of the input.
\diff{\sout{In particular, if we do not ask for a continuous input function it is impossible for an
output that is continuous by construction to follow a target that may jumps
discontinuously. This may gives arbitrarily large and irrelevant errors.}
For example, if we work with non-continuous input functions, just after a singularity of the
inputs, the error may be arbitrarily large. But this does not seems fair as our output, as
a solution of a PODE, will always be continuous. But even continuous functions are too
liberal, since one can construct function that goes arbitrarily fast from one value to
another, once again making our CRN unable to follow the pace.}

To avoid this, we
shall suppose that the inputs have a given maximal
rate of change, or to say it mathematically, that all the input functions are
$k$-Lipschitz for a given $k \in \R^+$:
\diff{$\forall t_1, t_2, |X(t_2) - X(t_1)| \leq k |t_2 - t_1|$}.
\diff{\sout{Hence we have} Lipschitz functions are differentiable almost everywhere, and
for each time $t$ where the inputs are differentiable, we have}:
   \sout{$\forall i, \left| \frac{dx_i}{dt}(t) \right| \leq k.$}
\begin{equation}
   \diff{\left| \frac{dX}{dt}(t) \right| \leq k.}
\end{equation}

\diff{While this may seem a stringent requirement, it is actually quite natural
to suppose that the concentration of chemical species
cannot increase or decrease arbitrarily fast.}

\subsection{General case}

For the sake of clarity, we shall derive the equations for a function of one variable
\diff{and shall suppose without loss of generality that the input is differentiable
everywhere.} 

\diff{To generalize to the case of functions of several variables, it is sufficient to
sum the derivatives of $f$ with respect to all variables, thus providing an over approximation
of the derivative with respect to any unitary vector.}

\diff{To handle cases where the inputs is not differentiable everywhere, the argument we
give here can be derived on every interval where the input is differentiable and the
whole solution can be recovered by continuity.}

\diff{\sout{We moreover will} We shall also} focus on the mathematical framework of PODE where the output may be
directly linked to the inputs and will study the case where we have to introduce variables
to quadratize the polynomial later.
Hence, given an input $x(t)$ and a CRN stabilizing the output $y$ on the function $f$. By
definition, we have $P_f$ the polynomial of the curve $f$. We thus have:
\begin{equation}\label{eq:stabilizing_ODE}
   \frac{dy}{dt} = \pm \alpha \diff{\cdot}P_f(\diff{x(t)},y),
\end{equation}
where $\alpha \in \R^+$, is a multiplicative constant that do not modify the stabilized
function but allows for a faster convergence by increasing the kinetic turnover of the
CRN. \diff{It is worth noting that due to the time varying inputs,
equation~\ref{eq:stabilizing_ODE} is technically a non-autonomous ODE, without modifying our reasoning.}

We define the error as $\epsilon(t) = y(t) - f(x(t))$. Note that we define a signed error
and allow $\epsilon$ to be negative. Differentiating with respect to time, we obtain:
\begin{equation} \label{eq:dt_epsilon}
   \frac{d \epsilon}{dt} = \pm \alpha \diff{\cdot}P_f\left(x\diff{(t)}, f(x)+\epsilon \right)
   - \frac{df}{dx} \diff{\cdot}\frac{dx}{dt}.
\end{equation}

The main unknown in this equation is $P_f$ but, let us remark that by its very
definition, \diff{\sout{this polynomial}$P_f(x, f(x)+\epsilon)$} is null when $\epsilon =
0$. We can thus introduce a new
polynomial that will be important for our error control:
\begin{equation}
   \pm P_f\left( x, f(x)+\epsilon \right) = - \epsilon \diff{\cdot}Q_f(x, \epsilon).
\end{equation}
Note that this is not an approximation but a definition of $Q_f$. A good property of
$Q_f$ is that it has to be positive in order to stabilize the desired branch, this explain
why the $\pm$ symbol on the left hand side of the previous equation is replaced by a minus
sign on the right.

Using the fact that $Q_f$ has a well defined sign and that $x$ is Lipschitz and
thus $-k \leq \frac{dx}{dt} \leq k$, equation~(\ref{eq:dt_epsilon}) allows us to bound the
derivative of $\epsilon$. Gathering the upper and lower bounds in a single differential
inequation we have:

\begin{equation} \label{ineq:main}
-\epsilon \diff{\cdot}\alpha\diff{\cdot} Q_f(x, \epsilon) - k \diff{\cdot}\frac{df}{dx}
\leq \frac{d\epsilon}{dt} \leq 
-\epsilon \diff{\cdot}\alpha \diff{\cdot}Q_f(x, \epsilon) + k \diff{\cdot}\frac{df}{dx}.
\end{equation}

This inequality is our central result as it allows to determine the sign of
$\frac{d \epsilon}{dt}$ and thus the flow of the error. If the left hand side is positive,
$\epsilon$ is increasing and if the right hand side is negative it is decreasing. In the
third case, that is if the left and right
hand side have different signs, we are not able to determine the sign of $\frac{d
\epsilon}{dt}$ \diff{\sout{and do not control the error}}.
Thus, this inequality~(\ref{ineq:main}) separates the space in $3$ different
regions. The ones where the derivative is positive or negative for sure, and the one where
we cannot determine its sign. These three regions are delimited by the
two curves where the right and left hand side vanished that is:
\begin{equation} \label{eq:boundary}
   \epsilon \diff{\cdot}Q_f(x, \epsilon) = \pm \frac{k}{\alpha} \diff{\cdot}\frac{df}{dx}.
\end{equation}

A consequence of the flow of the differential inequation is that a system that start in
the inner region where the sign is unknown will always stays in this region whatever the
variation of the input, hence giving us an upper and lower bound on the output. The
important parameter controlling this region is the ratio between the Lipschitz coefficient
of the input $k$ and the kinetic turn-over $\alpha$ \diff{since the error results from a kind
of race between the target function controlled by $k$, and the output controlled by $\alpha$}. An interesting consequence is that
the bounding becomes tighter when the input function stabilizes and $k$ decreases. The
precise form of this region \diff{\sout{is}can be} directly deduced from the polynomial $P_f$.

This brings us to the notion of online control of the error\diff{sout{. If we are}, when searching} for a
particular precision $p \in \R^+$ for the output given a constraint on the input to be $k$
Lipschitz, \diff{\sout{that is if we want to ensure} ensuring} that $\forall t, |\epsilon(t)| < p$.
Looking on equation~(\ref{eq:boundary}), we have to bound the right hand side.
\diff{\sout{, but this is
not necessarily possible on an open interval as
a minimum to $Q_f$ and a maximum to $\left| \frac{df}{dx} \right|$
may be on the boundary, or even diverge. This is no more a problem if we restrict our
analysis to a closed region $I \in \D$ where the extrema are well defined. We thus have
the following result.}
For this, we have to find a minimum to $Q_f$ and a maximum to $\left| \frac{df}{dx}
\right|$. To prevent these quantities from diverging on the boundaries of their domain, let us
restrict ourself to a closed region $I \subset \D$.}

\begin{proposition}
\diff{
   \sout{
   For input functions that are $k$-Lipschitz and a system in a closed region $I \subset \D$,
   there exist a kinetic parameter $\alpha^\star(p,I)$:
   such that, if $\alpha \geq \alpha^\star$
   and $|\epsilon(t=0)| \leq p$, then $\forall t, |\epsilon(t)| \leq p$
   with:

   $ \text{definition of } \alpha^\star$
   }
}
\end{proposition}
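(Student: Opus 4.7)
The plan is to use the differential inequality~(\ref{ineq:main}) as a trapping argument: I want to show that the strip $S_p = \{(x,\epsilon) : x \in I,\ |\epsilon|\leq p\}$ is forward-invariant in the $(x,\epsilon)$ plane as soon as $\alpha$ is taken large enough. Since $|\epsilon(0)|\leq p$ by hypothesis, forward-invariance of $S_p$ immediately yields $|\epsilon(t)|\leq p$ for all $t\geq 0$, which is exactly the conclusion sought.

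The first step is to analyse the flow on the two faces of the strip. On the upper face $\epsilon=+p$, the right-hand side of~(\ref{ineq:main}) gives
\begin{equation*}
\frac{d\epsilon}{dt} \leq -p \cdot \alpha \cdot Q_f(x,p) + k \cdot \left|\frac{df}{dx}\right|,
\end{equation*}
which is non-positive provided $\alpha \cdot p \cdot Q_f(x,p) \geq k \cdot \left|df/dx\right|$. Symmetrically, on the lower face $\epsilon=-p$, the left-hand side of~(\ref{ineq:main}) gives $d\epsilon/dt \geq p \cdot \alpha \cdot Q_f(x,-p) - k \cdot \left|df/dx\right|$, which is non-negative under the analogous condition. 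Hence the vector field points into $S_p$ at every boundary point as soon as $\alpha \geq k \cdot \left|df/dx\right| / (p \cdot Q_f(x,\pm p))$ pointwise on $I$.

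The second step is to uniformise this pointwise condition by compactness. Setting
\begin{equation*}
M_I = \sup_{x \in I}\left|\frac{df}{dx}(x)\right|, \qquad m_{I,p} = \inf_{\substack{x \in I\\ |\eta|\leq p}} Q_f(x,\eta),
\end{equation*}
both extrema are attained, $M_I$ is finite, and $m_{I,p}$ is strictly positive provided the full strip $S_p$ lies inside the stabilisation domain $\D$ (where $Q_f>0$ by construction of the stable branch). I would then define
\begin{equation*}
\alpha^\star(p,I) = \frac{k \cdot M_I}{p \cdot m_{I,p}},
\end{equation*}
so that for every $\alpha \geq \alpha^\star$ the boundary conditions above hold uniformly, and a standard Nagumo-type invariance argument applied to the two extremal ODEs bounding~(\ref{ineq:main}) closes the proof.

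The main obstacle I anticipate is guaranteeing $m_{I,p}>0$: if $S_p$ touches a branch point of $P_f$ or reaches another real sheet of the algebraic curve, $Q_f$ vanishes somewhere in the strip and $\alpha^\star$ diverges. This is why the bound must depend jointly on $I$ and $p$, and why the statement implicitly requires the \emph{a priori} check that $S_p \subset \D$. A secondary technicality is that~(\ref{ineq:main}) is a differential \emph{inequality} and not an equality, but local Lipschitz-continuity of $Q_f$ in $\epsilon$ makes the classical comparison machinery directly applicable to the upper and lower envelopes of $\epsilon(t)$.
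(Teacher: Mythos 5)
Your proposal is correct and follows essentially the same route as the paper: the paper's own justification is precisely the trapping argument on the boundary curves of inequality~(\ref{ineq:main}), uniformised by compactness of $I$, leading to the same threshold $\alpha^\star(p,I) = \frac{k}{p}\,\frac{\max_I \sum_i \left|\partial f/\partial x_i\right|}{\min Q_f}$ (your $M_I$ being the one-variable specialisation of the sum of partial derivatives). Your one refinement is worth keeping: you take the infimum of $Q_f$ over the full strip $|\eta|\leq p$ rather than only over $I$, and you make explicit the hypothesis $m_{I,p}>0$ (i.e.\ that the strip stays clear of branch points and other sheets), both of which are left implicit in the paper's statement.
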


\diff{
\begin{proposition}\label{prop:precision}
   For input functions that are $k$-Lipschitz and a system in a closed region $I \subset \D$,
   we can define the quantity:
\begin{equation}
	\alpha^\star(p, I) =  \frac{k}{p} \frac{\max_I{\sum_i \left| \frac{\partial f}{\partial
   x_i} \right|}}{\min_I{Q_f}}
\end{equation}
   ensuring that for $\alpha \geq \alpha^\star$, we have:
   $|\epsilon(t_1)| \leq p \Rightarrow \forall t > t_1, |\epsilon(t)| \leq p$.
	That is, as soon as the precision $p$ is reached, it will be conserved at all time.
\end{proposition}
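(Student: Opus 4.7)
The plan is to promote the differential inequality (\ref{ineq:main}) into a barrier argument at the level sets $\epsilon=\pm p$. First I would lift the inequality from the one-dimensional input case to the full multivariate setting: since $f(X(t))=f(x_1(t),\ldots,x_m(t))$, the chain rule and the $k$-Lipschitz assumption on each component give
\begin{equation}
\left| \frac{d}{dt} f(X(t)) \right| \;\leq\; k\,\sum_i \left|\frac{\partial f}{\partial x_i}\right|,
\end{equation}
so that the derivation leading to (\ref{ineq:main}) produces the direct analogue
\begin{equation}
-\epsilon\, \alpha\, Q_f(X,\epsilon) - k \sum_i \left|\frac{\partial f}{\partial x_i}\right|
\;\leq\; \frac{d\epsilon}{dt} \;\leq\;
-\epsilon\, \alpha\, Q_f(X,\epsilon) + k \sum_i \left|\frac{\partial f}{\partial x_i}\right|.
\end{equation}

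Next I would evaluate the two sides of this inequality exactly on the barriers $\epsilon = +p$ and $\epsilon = -p$. On $\epsilon = +p$ the relevant bound is the upper one: since $Q_f$ is strictly positive on the stabilized branch (this is precisely what makes $y=f(X)$ a stable equilibrium in the proof of Lemma~\ref{lmm:<-}) and $I$ is closed, $\min_I Q_f > 0$ is attained, and the upper bound is at most $-p\,\alpha\,\min_I Q_f + k\,\max_I \sum_i|\partial f/\partial x_i|$. The definition of $\alpha^\star$ is tailored so that this expression is non-positive exactly when $\alpha \geq \alpha^\star$, giving $\dot\epsilon \leq 0$ on the upper barrier. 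A symmetric argument on $\epsilon = -p$, using the lower bound of the inequality, gives $\dot\epsilon \geq 0$.

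The last step is to turn these two inward-pointing sign conditions into the desired invariance. I would argue by contradiction: if there were a first time $t_2 > t_1$ at which $|\epsilon|$ left $[-p,p]$, by continuity of $\epsilon$ (since $y$ solves a PODE driven by the continuous function $X(t)$) we would have $|\epsilon(t_2)| = p$ and $\frac{d}{dt}|\epsilon|(t_2) > 0$, which contradicts the sign of $\dot\epsilon$ just established at the two barriers. Since $|\epsilon(t_1)| \leq p$ by assumption, the trajectory therefore stays in $[-p,p]$ for all subsequent times, which is the claim.

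The main subtlety I expect is twofold. First, one needs $\min_I Q_f > 0$, which forces $I$ to stay strictly away from the branch points of $P_f$; this is guaranteed by the multiplicity-one condition and the fact that $I$ is a closed subset of $\D$, but deserves to be stated explicitly. Second, $k$-Lipschitz inputs are only differentiable almost everywhere, so the chain-rule computation and the differential inequality hold a.e.\ rather than pointwise; the barrier argument should then be phrased via the absolute continuity of $\epsilon$ (so that $|\epsilon(t)| - |\epsilon(t_1)|$ equals the integral of its a.e.\ derivative), after which the contradiction goes through unchanged.
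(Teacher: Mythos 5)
Your proposal is correct and follows essentially the same route as the paper, which in fact states Proposition~\ref{prop:precision} without a formal proof and relies on the preceding derivation: the differential inequality~(\ref{ineq:main}) summed over the partial derivatives, the boundary curves~(\ref{eq:boundary}), and the compactness of $I$ to bound $\min_I Q_f$ from below and $\max_I \sum_i |\partial f/\partial x_i|$ from above. Your write-up is actually more careful than the paper's on the two points you flag (strict positivity of $Q_f$ on a closed $I$ away from branch points, and the almost-everywhere differentiability of $k$-Lipschitz inputs), and the barrier argument at $\epsilon=\pm p$ is the intended one.
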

}

Before showing on an example the derivation of the stable region. Let us have a last glance on
the equation~(\ref{eq:dt_epsilon}) to discuss the transient response if the system start
outside of the stable region defines by equation~(\ref{eq:boundary}). Assuming that the system
is nonetheless in the domain $\D$
and will not diverge or reach another branch of the algebraic curve. The transient regime
is important when the last term of equation~(\ref{ineq:main}) is negligible and thus, we will
consider the case where $k=0$, that is the inputs are fixed.
The main argument now is that we have suppose that the polynomial $P_f$ was of
multiplicity \diff{\sout{one}$1$} for the branch we want to stabilize, or said otherwise: $Q_f(X,\epsilon)$
is such that for all initial conditions $X$, $\epsilon = 0$ is not a root of $Q_f$. The
direct consequence of this remark is that, at least locally around the solution, the
convergence toward the stable region is always exponential with a characteristic time of
the order of $\frac{1}{\alpha \diff{\cdot}Q_f(X, 0)}$. A more rigorous proof of this assertion can be
found in~\cite{FKTLNR21ucnc}.

\vspace{1em}

To sum up the behavior of a stabilizing CRN when the input functions are $k$-Lipschitz.
We first have a transient response of exponential decay toward the desired target
function. And once the system enter into a region the shape of which is defined by $P_f$
and the width by the ratio $\frac{k}{\alpha}$, we know for sure that it will remains in
this bounded region.

This gives a mitigate result for the precision, on one side we first
have a fast increase in precision when the system is far from the target function because
the exponential decay means that we gain one bit of precision at each time step (up to a
multiplicative factor). But once we have reached the bounded region, as $\alpha^\star(p) \propto
\frac{1}{p}$, each supplementary bit of precision asks for a factor $2$ in the kinetic
turn-over which may represent a high energetic cost.

\begin{example}
	To clarify the general case above, we will \diff{first} develop our framework on the simplest case:
$f(x) = x^2$. Thus $P_f(x,y) = x^2 - y$, and we have:
\begin{align}
    \frac{dy}{dt} &= x^2 - y, \\
    \frac{df}{dx} &= 2\diff{\cdot}x, \\
    Q_f(x,\epsilon) &= 1.
\end{align}

	We can then rewrite inequation~(\ref{ineq:main}) for this particular case:
\begin{equation}
    -2\diff{\cdot}k\diff{\cdot}x - \epsilon \diff{\cdot}\alpha \leq \frac{d \epsilon}{dt} \leq -\epsilon \diff{\cdot}\alpha +2\diff{\cdot}k\diff{\cdot}x.
\end{equation}

	Or taking directly the boundaries given by equation~(\ref{eq:boundary}): $\epsilon = \pm
2\frac{k}{\alpha}x$. This gives a bounding interval for the output function:
	\begin{equation}
		x^2 - 2\frac{k}{\alpha}x \leq y(t) \leq x^2 + 2\frac{k}{\alpha}x .
	\end{equation}
	The idea is that while the exact behaviour of the output is unknown without adding more
	constraints on the input, we can at least ensure this permissive bounding on the
	system as presented in figure~\ref{fig:error_control}\textbf{A}.
\end{example}

\begin{figure}
	\centering
	\textbf{A.} \includegraphics[width=.45\textwidth]{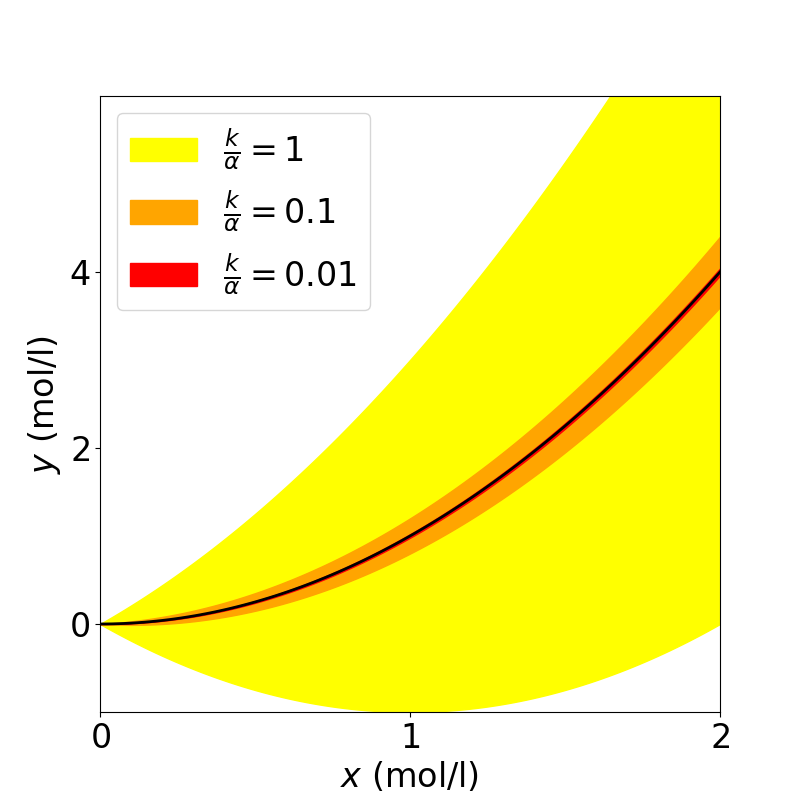}
	\textbf{B.} \includegraphics[width=.45\textwidth]{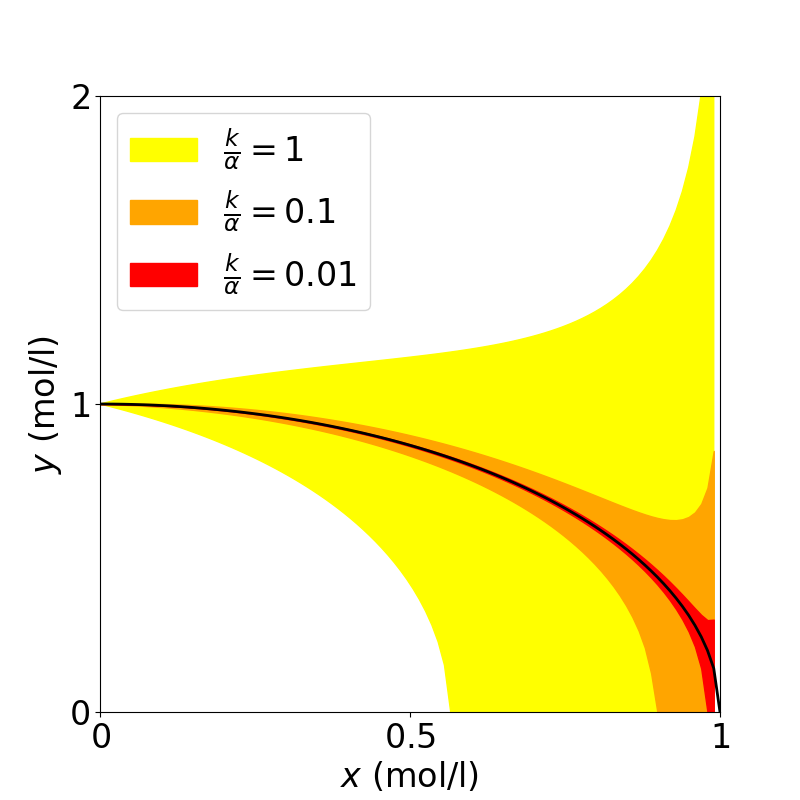}
	\caption{Bounding error regions for various values of $\frac{k}{\alpha}$.
	\textbf{A.} $f(x) = x^2$.
	\textbf{B.} $f(x) = \sqrt{1-x^2}$.
	}\label{fig:error_control}
\end{figure}

\begin{example}
	Now we will look at the error control for the CRN stabilizing the circle already
	presented in example~\ref{ex:circle}. We can easily derived the following
	relations for the main quantities:
	\begin{gather}
		\begin{aligned}
			f(x) &= \sqrt{1-x^2}, &\quad \frac{df}{dx} &= -\frac{x}{f(x)}, \\
			P_f(x,y) &= 1-x^2-y^2, &\quad Q_f(x,\epsilon) &= - \epsilon - 2 \diff{\cdot}f(x).
		\end{aligned}
	\end{gather}
	Let us make a pause on the roots of $Q_f$ ($\epsilon = -2\diff{\cdot} f(x)$) to note that they are
	all on the lower branch of the circle and thus outside of the domain $\D$. Hence
	validating the fact that we have an exponential convergence for all points inside the
	domain.

	Now inserting the computed quantities into the general case
	equation~(\ref{eq:boundary}), we have:
	\begin{equation}
		\epsilon^2 + 2 \diff{\cdot}\epsilon \diff{\cdot}f(x) \pm \frac{k}{\alpha} \frac{x}{f(x)} = 0
	\end{equation}
	Solving this quadratic equation, we obtain the upper and lower error bounds which gives
	us a bounding for the output when the input is $k$-Lipschitz:
	\begin{equation}
		f(x) \sqrt{1 - \frac{k}{\alpha} \frac{x}{f^3(x)}} \leq
		y(t) \leq f(x) \sqrt{1 + \frac{k}{\alpha} \frac{x}{f^3(x)}}.
	\end{equation}

	As can be seen on figure~\ref{fig:error_control}\textbf{B.}, a peculiar behaviour of
	this inequality is that there exists a region near $x=1$ where the lower bound is
	undefined. Intuitively, this is linked to the possibility, when close from the
   monodromy point $(x=1, y=0)$ of crossing the lower branch of the circle and from there
   the flow will drive the system to $y \rightarrow - \infty$ as it is no more on the
   $\D$ domain (see fig.~\ref{fig:circle}).
\end{example}

While these regions may appear quite large, they corresponds to a powerful notion of error
control in the sense that they ensure a given precision for any \diff{k-}Lipschitz input functions,
which is a small constraint. \diff{Moreover, }in practice, the error observed for typical input functions
like offset sine waves are far smaller than the bounds given by our analysis.

\subsection{Error control with intermediate species}

Our main tool to transform the ODE so that they are amenable to a CRN implementation is
through the introduction of new variables. This rises the question of how the final error
is affected by this kind of transformation. 
We will not provide a full analysis of this phenomena in the general case but
will build an intuition upon two basic examples corresponding to the two main cases where
we have to introduce new variables: the quadratization when dealing with monomial of
degree higher than $2$ and the polynomialization when the expression provided by the user
is not directly a polynomial.

\begin{example}
We will start with the simplest case of quadratization taking $y = x^4$ as
our direct case and $y = z^2 = x^4$ as the chained case.

For the direct case we just repeat the computation of the previous subsections and
immediately obtain:
\begin{equation}
    y = x^4 \pm 4 \frac{k}{\alpha} x^3.
\end{equation}

For the quadratized case, we rely on the computation of the square case to obtain:
\begin{equation}
	\begin{aligned}
		y &= z^2 \pm 2 \frac{k}{\alpha} z, \\
		  &= x^4 \pm 4 \frac{k}{\alpha} x^3 \pm 4 \left( \frac{k}{\alpha} \right)^2 x^2
					\pm 2 \frac{k}{\alpha} x^2 \pm 4 \left( \frac{k}{\alpha} \right)^2 x.
	\end{aligned}
\end{equation}
and we can see that the first error term is the same as the one of the direct case why all
the other ones are supplementary errors introduced by the quadratization, meaning that the
quadratization is a net source of errors. This was expected as it introduces delay along
the computation so that the output is less reactive to a variation of the input.
The supplementary error introduced by the quadratization is especially important when
	$\frac{k}{\alpha}>1$ and $x \ll 1$ increasing the importance of the ratio between the
	regularity of the input and the kinetic turnover for quadratized systems.
\end{example}

\begin{figure}
	\centering
	\includegraphics[width=.6\textwidth]{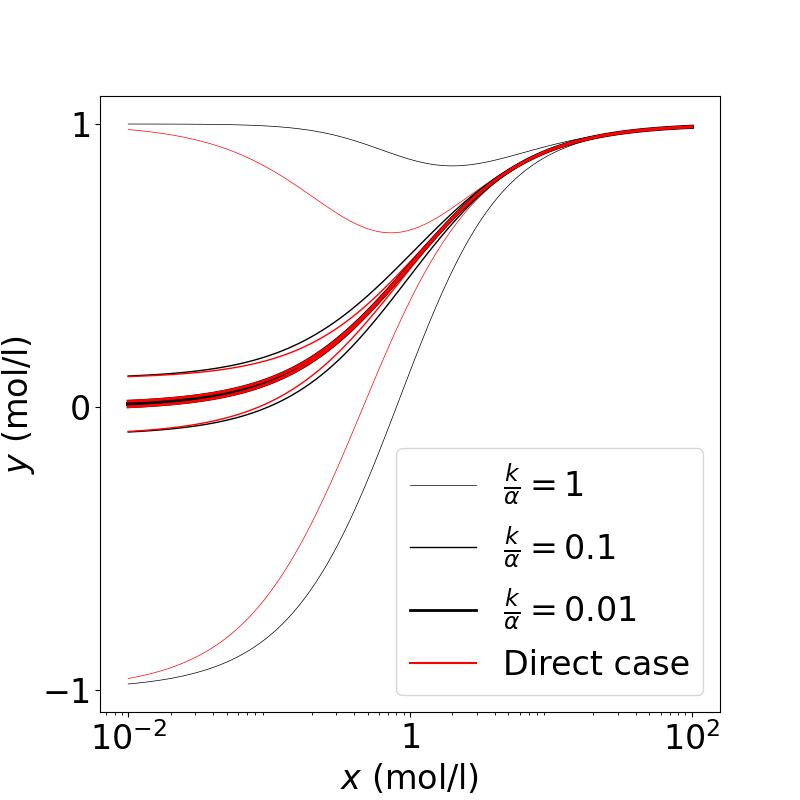}
	\caption{Bounding error regions for the hill function using the direct computation (in
	red) and the intermediate species of the inverted hill function (in
	black). The red region is always tighter than the black one indicating the loss of
	precision introduced by the new species. The difference between the two methods is
	especially important around the $x=1$ region.
	}\label{fig:error_hill}
\end{figure}

\begin{example}
	A more involved case is the stabilization of the Hill function $f(x) = \frac{x}{1+x}$.
	While this case may seem somehow artificial as the direct implementation is possible
	and shows a better bounding on the error. The CRN with one variable added is actually
	simpler to implement when going to higher order hill functions and may effectively be
	produced by our algorithm depending on the way the expression is provided to the
	compiler (see example~\ref{ex:hill}).

	The simplest way to stabilize a hill function is thus through a direct implementation:
	\begin{equation}
		\frac{dy}{dt} = x - y - x\diff{\cdot}y.
	\end{equation}
	computing the error in this case is easy and we obtain:
	\begin{equation}
		|\epsilon| \leq \frac{k}{\alpha} \frac{1}{(1+x)^3}.
	\end{equation}
	But another path is to introduce the inverse of the hill function $z = \frac{1}{1+x}$
	and use it to construct the desired output:
	\begin{gather}
		\begin{aligned}
			\frac{dz}{dt} &= 1 - z - z\diff{\cdot}x \\ 
			\frac{dy}{dt} &= z\diff{\cdot}x - y
		\end{aligned}
	\end{gather}
	In this case, the error on $z$ is exactly the same as the computation of the direct
	path but this error is then passed on $y$ giving us a final error on the output:
	\begin{equation}
		|\epsilon| \leq \frac{k}{\alpha} \left(\frac{x}{(1+x)^3}+\frac{1}{(1+x)^2}\right),
	\end{equation}
	which is, once again, strictly worse than the direct case as can be seen on
	figure~\ref{fig:error_hill}. The error in the direct and indirect path are similar when
	the input is far from $1$ but in the region near the threshold of the hill function,
	the error of the indirect case may be far larger than the direct case. Moreover, it is
	impossible to increase $\alpha$ to mitigate this effect as both cases scale similarly
	with respect to the kinetic turnover.
\end{example}

To conclude this section, using intermediate species to perform the computation is always
detrimental in terms of error control.
The two main sources of error being the amplification of intermediate errors and the
computational delay making the output less sensitive to fast variations of the input
functions.

While the first kind of errors may be handle through an increase of the kinetic turnover,
which is important because of there prevalence in the quadratization scheme. The second
kind cannot be reduced except by a better choice during the reduction of the expression to
be compiled.

For now, our polynomialization algorithm does not incorporate any form of choice based on
the scaling of the error but it could be interesting to see, when several
polynomialization or quadratization are possible, if we could determine the ones producing
the tighter region for the final output.

\section{Conclusion and perspectives}

We have introduced a notion of \diff{Absolute Functional Robustness 
  for CRNs that compute a real function} online\sout{analog chemical computation of real functions in which}\diff{, in the sense that} the
concentration of one output species continuously stabilizes to the result of some function
of the input species concentrations, whatever changes are applied to
the molecular concentrations within the domain of definition of the function. We
have shown that the set of real functions that can be continuously stabilized by a CRN \diff{with mass action law kinetics}
is precisely the set of real algebraic functions.

Furthermore, by restricting the changes of the inputs to continuous changes with a rate bounded by some constant,
i.e.~by restricting to \diff{environment inputs driven by} $k$-Lipschitz functions,
we were able to bound the error on the output by an analytic expression which can in principle be used  to control the error.

\diff{While o}ur main theorem  focuses on CRNs at steady states, \diff{some} important aspects of signal
processing are linked to the timings of the signals.
Our CRN stabilization framework relies on ratios between molecular production and degradation,
and the error analysis we have done relates the possible multiplication of both terms by some factor $\alpha$
to  the characteristic time \diff{\sout{$\tau$}$T_\alpha$} of relaxation, with \diff{\sout{$\tau \propto \frac{1}{\alpha}$}$T_\alpha \propto \frac{1}{\alpha}$}.
It is thus worth remarking in this respect that high
value of \diff{\sout{$\tau$}$T_\alpha$} filter out the high frequency noise of the inputs, while small \diff{\sout{$\tau$}$T_\alpha$} values
result in a more accurate output, yet at the expense of a higher molecular turnover.

These theoretical results open a whole research avenue for both the understanding of the structure of
natural CRNs that allow cells to adapt to their environment, and for the design of
artificial CRNs to implement in chemistry high-level functions that have to be maintained in moving environments.
\diff{For these reasons, o}ur online computational framework \diff{seems} better suited 
than our earlier definition of functions computed by a CRN with fixed initial conditions~\cite{FLBP17cmsb},
to both the formal study of natural CRNs in the perspective of systems biology,
and the design of robust artificial CRNs in the perspective of synthetic biology.

The CRN compiler of algebraic functions we have implemented in Biocham according to these
principles, automatically generates an abstract CRN.
Implementing \diff{that abstract CRN with a concrete CRN using} with real enzymes, as done e.g.~in~\cite{CAFRM18msb} \diff{for the making of protocellular biosensors}, 
\diff{\sout{is not necessarily
possible for a given choice of generated abstract CRN while it could be possible for
another choice.}
may depend on a non trivial way of the different choices made during the steps of
polynomialization, dual-rail encoding and quadratization, rendering the final abstract CRN prone
to a concrete implementation, or not.}
Taking into account a catalog of concrete enzymatic reactions
earlier on in our compilation pipeline, i.e. \diff{\sout{in}during} the polynomialization, quadratization and
dual-rail encoding steps, is a particularly interesting challenge.
First, from the point of view of the potential applications in the biomedical and environment domains,
to guide search towards concrete economical solutions,
but also from the point of view of the computational complexity of the computer algebra problems that need to be solved in our compilation scheme,
e.g.~for guiding search in our quadratization problem shown NP-hard in~\cite{HFS20cmsb} and currently solved in practice using a MAXSAT solver.

\subsubsection*{Acknowledgments.} We are grateful to Amaury Pouly and Sylvain Soliman for preliminary discussions on this work,
to the reviewers for their very insigthful comments,
and to ANR-20-CE48-0002 Difference and Inria AEx GRAM grants for partial support.

\bibliographystyle{plain}
\bibliography{contraintes}

\begin{thebibliography}{10}

\bibitem{BEHL09al}
H.~J. Buisman, H.~M.~M. ten Eikelder, P.~A.~J. Hilbers, and A.~M.~L. Liekens.
\newblock Computing algebraic functions with biochemical reaction networks.
\newblock {\em Artificial Life}, 15(1):5--19, 2009.

\bibitem{CFS06bi}
Laurence Calzone, Fran{\c{c}}ois Fages, and Sylvain Soliman.
\newblock {BIOCHAM}: An environment for modeling biological systems and
  formalizing experimental knowledge.
\newblock {\em Bioinformatics}, 22(14):1805--1807, 2006.

\bibitem{CTT20nc}
Luca Cardelli, Mirco Tribastone, and Max Tschaikowski.
\newblock From electric circuits to chemical networks.
\newblock {\em Natural Computing}, 19, 2020.

\bibitem{CPSW05ejde}
David~C. Carothers, G.~Edgar Parker, James~S. Sochacki, and Paul~G. Warne.
\newblock Some properties of solutions to polynomial systems of differential
  equations.
\newblock {\em Electronic Journal of Differential Equations}, 2005(40):1--17,
  2005.

\bibitem{CLN13issb}
Vijayalakshmi Chelliah, Camille Laibe, and Nicolas Nov{\`e}re.
\newblock Biomodels database: A repository of mathematical models of biological
  processes.
\newblock In Maria~Victoria Schneider, editor, {\em In Silico Systems Biology},
  volume 1021 of {\em Methods in Molecular Biology}, pages 189--199. Humana
  Press, 2013.

\bibitem{CDS12nc}
Ho-Lin Chen, David Doty, and David Soloveichik.
\newblock Deterministic function computation with chemical reaction networks.
\newblock {\em Natural computing}, 7433:25--42, 2012.

\bibitem{CSWB09ab}
Matthew Cook, David Soloveichik, Erik Winfree, and Jehoshua Bruck.
\newblock Programmability of chemical reaction networks.
\newblock In Anne Condon, David Harel, Joost~N. Kok, Arto Salomaa, and Erik
  Winfree, editors, {\em Algorithmic Bioprocesses}, pages 543--584. Springer
  Berlin Heidelberg, Berlin, Heidelberg, 2009.

\bibitem{CAFRM18msb}
Alexis Courbet, Patrick Amar, Fran{\c{c}}ois Fages, Eric Renard, and Franck
  Molina.
\newblock Computer-aided biochemical programming of synthetic microreactors as
  diagnostic devices.
\newblock {\em Molecular Systems Biology}, 14(4), 2018.

\bibitem{CF06siamjam}
Gheorghe Craciun and Martin Feinberg.
\newblock Multiple equilibria in complex chemical reaction networks: {II}. the
  species-reaction graph.
\newblock {\em SIAM Journal on Applied Mathematics}, 66(4):1321--1338, 2006.

\bibitem{DWGLERBW14nar}
X.~Duportet, L.~Wroblewska, P.~Guye, Y.~Li, J.~Eyquem, J.~Rieders, G.~Batt, and
  R.~Weiss.
\newblock A platform for rapid prototyping of synthetic gene networks in
  mammalian cells.
\newblock {\em Nucleic Acids Research}, 42(21), 2014.

\bibitem{ET89book}
P{\'e}ter {\'E}rdi and J{\'a}nos T{\'o}th.
\newblock {\em Mathematical Models of Chemical Reactions: Theory and
  Applications of Deterministic and Stochastic Models}.
\newblock Nonlinear science : theory and applications. Manchester University
  Press, 1989.

\bibitem{FLBP17cmsb}
Fran\c{c}ois Fages, Guillaume Le~Guludec, Olivier Bournez, and Amaury Pouly.
\newblock {Strong Turing Completeness of Continuous Chemical Reaction Networks
  and Compilation of Mixed Analog-Digital Programs}.
\newblock In {\em {CMSB'17}: Proceedings of the fiveteen international
  conference on Computational Methods in Systems Biology}, volume 10545 of {\em
  Lecture Notes in Computer Science}, pages 108--127. Springer-Verlag,
  September 2017.

\bibitem{FGS15tcs}
Fran{\c{c}}ois Fages, Steven Gay, and Sylvain Soliman.
\newblock Inferring reaction systems from ordinary differential equations.
\newblock {\em Theoretical Computer Science}, 599:64--78, September 2015.

\bibitem{FS08tcs}
Fran{\c{c}}ois Fages and Sylvain Soliman.
\newblock Abstract interpretation and types for systems biology.
\newblock {\em Theoretical Computer Science}, 403(1):52--70, 2008.

\bibitem{Feinberg77crt}
Martin Feinberg.
\newblock Mathematical aspects of mass action kinetics.
\newblock In L.~Lapidus and N.~R. Amundson, editors, {\em Chemical Reactor
  Theory: A Review}, chapter~1, pages 1--78. Prentice-Hall, 1977.

\bibitem{FKTLNR21ucnc}
Willem Fletcher, Titus~H Klinge, James~I Lathrop, Dawn~A Nye, and Matthew
  Rayman.
\newblock Robust real-time computing with chemical reaction networks.
\newblock In {\em Unconventional Computation and Natural Computation: 19th
  International Conference, UCNC 2021, Espoo, Finland, October 18--22, 2021,
  Proceedings 19}, pages 35--50. Springer, 2021.

\bibitem{HT79cmsjb}
V.~H{\'a}rs and J.~T{\'o}th.
\newblock On the inverse problem of reaction kinetics.
\newblock In M.~Farkas, editor, {\em Colloquia Mathematica Societatis J{\'a}nos
  Bolyai}, volume~30 of {\em Qualitative Theory of Differential Equations},
  pages 363--379, 1979.

\bibitem{HF22cmsb}
Mathieu Hemery and Fran{\c{c}}ois Fages.
\newblock Algebraic biochemistry: a framework for analog online computation in
  cells.
\newblock In {\em {CMSB'22}: Proceedings of the twentieth international
  conference on Computational Methods in Systems Biology}, volume 13447 of {\em
  Lecture Notes in Computer Science}. Springer-Verlag, September 2022.

\bibitem{HFS20cmsb}
Mathieu Hemery, Fran{\c{c}}ois Fages, and Sylvain Soliman.
\newblock On the complexity of quadratization for polynomial differential
  equations.
\newblock In {\em {CMSB'20}: Proceedings of the eighteenth international
  conference on Computational Methods in Systems Biology}, Lecture Notes in
  Computer Science. Springer-Verlag, September 2020.

\bibitem{HFS21cmsb}
Mathieu Hemery, Fran{\c{c}}ois Fages, and Sylvain Soliman.
\newblock Compiling elementary mathematical functions into finite chemical
  reaction networks via a polynomialization algorithm for {ODE}s.
\newblock In {\em {CMSB'21}: Proceedings of the nineteenth international
  conference on Computational Methods in Systems Biology}, volume 12881 of {\em
  Lecture Notes in Computer Science}. Springer-Verlag, September 2021.

\bibitem{HF19jpcb}
Mathieu Hemery and Paul Fran{\c{c}}ois.
\newblock In silico evolution of biochemical log-response.
\newblock {\em The Journal of Physical Chemistry B}, 2019.

\bibitem{HF96pnas}
Chi-Ying Huang and James~E. Ferrell.
\newblock Ultrasensitivity in the mitogen-activated protein kinase cascade.
\newblock {\em PNAS}, 93(19):10078--10083, September 1996.

\bibitem{Hucka03bi}
Michael Hucka et~al.
\newblock The systems biology markup language ({SBML}): A medium for
  representation and exchange of biochemical network models.
\newblock {\em Bioinformatics}, 19(4):524--531, 2003.

\bibitem{sbml20msb}
Sarah~M. Keating, Dagmar Waltemath, Matthias K{\"o}nig, Fengkai Zhang, Andreas
  Dr{\"a}ger, Claudine Chaouiya, Frank~T. Bergmann, Andrew Finney, Colin~S.
  Gillespie, Tom{\'a}{s} Helikar, Stefan Hoops, Rahuman~S. Malik-Sheriff,
  Stuart~L. Moodie, Ion~I. Moraru, Chris~J. Myers, Aur{\'e}lien Naldi, Brett~G.
  Olivier, Sven Sahle, James~C. Schaff, Lucian~P. Smith, Maciej~J. Swat, Denis
  Thieffry, Leandro Watanabe, Darren~J. Wilkinson, Michael~L. Blinov, Kimberly
  Begley, James~R. Faeder, Harold~F. G{\'o}mez, Thomas~M. Hamm, Yuichiro
  Inagaki, Wolfram Liebermeister, Allyson~L. Lister, Daniel Lucio, Eric
  Mjolsness, Carole~J. Proctor, Karthik Raman, Nicolas Rodriguez, Clifford~A.
  Shaffer, Bruce~E. Shapiro, Joerg Stelling, Neil Swainston, Naoki Tanimura,
  John Wagner, Martin Meier-Schellersheim, Herbert~M. Sauro, Bernhard Palsson,
  Hamid Bolouri, Hiroaki Kitano, Akira Funahashi, Henning Hermjakob, John~C.
  Doyle, Michael Hucka, Richard~R. Adams, Nicholas~A. Allen, Bastian~R.
  Angermann, Marco Antoniotti, Gary~D. Bader, Jan {C}erven{\'y}, M{\'e}lanie
  Courtot, Chris~D. Cox, Piero~Dalle Pezze, Emek Demir, William~S. Denney,
  Harish Dharuri, Julien Dorier, Dirk Drasdo, Ali Ebrahim, Johannes Eichner,
  Johan Elf, Lukas Endler, Chris~T. Evelo, Christoph Flamm, Ronan~MT Fleming,
  Martina Fr{\"o}hlich, Mihai Glont, Emanuel Gon\c{c}alves, Martin Golebiewski,
  Hovakim Grabski, Alex Gutteridge, Damon Hachmeister, Leonard~A. Harris,
  Benjamin~D. Heavner, Ron Henkel, William~S. Hlavacek, Bin Hu, Daniel~R.
  Hyduke, Hidde Jong, Nick Juty, Peter~D. Karp, Jonathan~R. Karr, Douglas~B.
  Kell, Roland Keller, Ilya Kiselev, Steffen Klamt, Edda Klipp, Christian
  Kn{\"u}pfer, Fedor Kolpakov, Falko Krause, Martina Kutmon, Camille Laibe,
  Conor Lawless, Lu~Li, Leslie~M. Loew, Rainer Machne, Yukiko Matsuoka, Pedro
  Mendes, Huaiyu Mi, Florian Mittag, Pedro~T. Monteiro, Kedar~Nath Natarajan,
  Poul~MF Nielsen, Tramy Nguyen, Alida Palmisano, Jean-Baptiste Pettit, Thomas
  Pfau, Robert~D. Phair, Tomas Radivoyevitch, Johann~M. Rohwer, Oliver~A.
  Ruebenacker, Julio Saez-Rodriguez, Martin Scharm, Henning Schmidt, Falk
  Schreiber, Michael Schubert, Roman Schulte, Stuart~C. Sealfon, Kieran
  Smallbone, Sylvain Soliman, Melanie~I. Stefan, Devin~P. Sullivan, Koichi
  Takahashi, Bas Teusink, David Tolnay, Ibrahim Vazirabad, Axel Kamp, Ulrike
  Wittig, Clemens Wrzodek, Finja Wrzodek, Ioannis Xenarios, Anna Zhukova, and
  Jeremy Zucker.
\newblock {SBML} level 3: an extensible format for the exchange and reuse of
  biological models.
\newblock {\em Molecular Systems Biology}, 16(8):1--21, August 2020.

\bibitem{OK11iet}
K.~Oishi and E.~Klavins.
\newblock Biomolecular implementation of linear i/o systems.
\newblock {\em IET Systems Biology}, 5(4):252--260, 2011.

\bibitem{QSW11dna}
Lulu Qian, David Soloveichik, and Erik Winfree.
\newblock Efficient turing-universal computation with {DNA} polymers.
\newblock In {\em Proc. {DNA} Computing and Molecular Programming}, volume 6518
  of {\em LNCS}, pages 123--140. Springer-Verlag, 2011.

\bibitem{Segel84book}
Lee~A. Segel.
\newblock {\em Modeling dynamic phenomena in molecular and cellular biology}.
\newblock Cambridge University Press, 1984.

\bibitem{SF10science}
Guy Shinar and Martin Feinberg.
\newblock Structural sources of robustness in biochemical reaction networks.
\newblock {\em Science}, 327(5971):1389--1391, 2010.

\bibitem{VSK18dna}
Marko Vasic, David Soloveichik, and Sarfraz Khurshid.
\newblock {CRN++}: Molecular programming language.
\newblock In {\em Proc. {DNA} Computing and Molecular Programming}, volume
  11145 of {\em LNCS}, pages 1--18. Springer-Verlag, 2018.

\end{thebibliography}

\end{document}